\setlist[1]{labelindent=\parindent}
\setlist[enumerate]{label=(\arabic*)}
\title{When is String Reconstruction using de Bruijn Graphs Hard?} %TODO Please add
\author[1,2]{Ben Bals}
\author[1,3]{Sebastiaan van Krieken}
\author[1,2]{Solon P. Pissis}
\author[1,2]{Leen Stougie}
\author[1]{\\Hilde Verbeek}
\affil[1]{CWI, Amsterdam, The Netherlands}
\affil[2]{Vrije Universiteit, Amsterdam, The Netherlands}
\affil[3]{TU Delft, The Netherlands}
\newtheorem{theorem}{Theorem}[section]
\newtheorem{corollary}[theorem]{Corollary}
\newtheorem{lemma}[theorem]{Lemma}
\newtheorem{definition}[theorem]{Definition}
\newtheorem{remark}[theorem]{Remark}
\crefname{enumi}{condition}{conditions}
\else\DeclareMathOperator{\N}{\mathbb{N}}\fi
\else\DeclareMathOperator{\Z}{\mathbb{Z}}\fi
\else\DeclareMathOperator{\R}{\mathbb{R}}\fi
\newcommand{\BigO}{\mathcal{O}}
\newcommand{\setsize}[1]{\left|#1\right|}
\newif\ifpaper
\definecolor[named]{benGray}{rgb}{0.61,0.61,0.63}
\definecolor[named]{maybeDone}{cmyk}{0.45,0,0.50,0.32}
\newcommand{\dirProblem}{\mathrm{diET}\xspace}
\newcommand{\dirProblemCost}{\mathrm{dicET}\xspace}
\newcommand{\undirProblemCost}{\mathrm{uicET}\xspace}
\newcommand{\iWidth}{w}
\newcommand{\fptProblem}{\dirProblem[w]\xspace}
\newcommand{\fptProblemCost}{\dirProblemCost[w]\xspace}
\newcommand{\fpt}{\mathsf{FPT}}
\newcommand{\PClass}{\mathsf{P}}
\newcommand{\NP}{\mathsf{NP}}
\newcommand{\SharpP}{\#\PClass}
\newcommand{\cmax}{c_\mathrm{budget}}
\newcommand{\tail}[1]{\mathrm{tail}(e)}
\newcommand{\letter}[1]{\texttt{#1}}
\newcommand{\letA}{\letter{A}}
\newcommand{\letT}{\letter{T}}
\newcommand{\id}[1]{\mathrm{id}_{#1}}
\newcommand{\fptBase}{\lambda}
\newcommand{\intervals}[1]{\mathcal{I}_{#1}}
\DeclareMathOperator{\imw}{imw}
\newcommand{\str}{\alpha}
\def\dd{\mathinner{.\,.}}
\definecolor{CWIBlue}{rgb}{0.38, 0.478, 0.56}
\definecolor{CWIDGreen}{rgb}{0.372, 0.494, 0.419}
\definecolor{CWILGreen}{rgb}{0.603, 0.717, 0.486}
\definecolor{CWIRed}{rgb}{0.686, 0.188, 0.262}
\def\@ifundefinedcolor#1{\@ifundefined{\string\color@#1}}
\newcommand{\accentColor}{%
  \@ifundefinedcolor{lipicsYellow}{CWIRed}{lipicsYellow}}
\newcolumntype{\expand}{}
\long\@namedef{NC@rewrite@\string\expand}{\expandafter\NC@find}
  \def\problem@arg{#1}%
  \def\problem@framed{framed}%
  \def\problem@hline{\hline}%
\def\problem@tablelayout{|>{\bfseries}lX|c}%
\def\problem@title{\multicolumn{2}{|%
  >{\raisebox{-\fboxsep}}%
  p{\dimexpr\textwidth-4\fboxsep-2\arrayrulewidth\relax}%
  |}{%
      \textcolor{\accentColor}{\bfseries$\blacktriangleright$} {\bfseries Problem.} \textsc{#2}%
  }}%
\begin{document}

\maketitle

%TODO mandatory: add short abstract of the document
\begin{abstract}
The reduction of the fragment assembly problem to (variations of) the classical Eulerian trail problem [Pevzner et al., PNAS~2001] has led to remarkable progress in genome assembly. This reduction employs the notion of \emph{de Bruijn graph} $G=(V,E)$ of order $k$ over an alphabet $\Sigma$. A single Eulerian trail in $G$ represents a \emph{candidate} genome reconstruction. Bernardini et al.~have also introduced the complementary idea in data privacy [ALENEX 2020] based on $z$-anonymity. Let $S$ be a private string that we would like to release, preventing, however, its full reconstruction. For a privacy threshold $z>0$, we compute the largest $k$ for which there exist at least $z$ Eulerian trails in the order-$k$ de Bruijn graph of $S$, and release a string $S'$ obtained via a random Eulerian trail.

The pressing question is: How hard is it to reconstruct a best string from a de Bruijn graph given a function that models domain knowledge? Such a function maps every length-$k$ string to an \emph{interval of positions} where it may occur in the reconstructed string. By the above reduction to de Bruijn graphs, the latter function translates into a function $c$ mapping every edge to an interval  where it may occur in an Eulerian trail. This gives rise to the following basic problem on graphs:
\begin{center}
\emph{Given an instance $(G,c)$, can we efficiently compute an Eulerian trail respecting $c$?} 
\end{center}
Hannenhalli et al.~[CABIOS 1996] formalized this problem and showed that it is \(\NP\)-complete. Ben{-}Dor et al.~[J.~Comput.~Biol.~2002] showed that it is \(\NP\)-complete, even on de Bruin graphs with $|\Sigma|=4$. In this work, we settle the lower-bound side of this problem by showing that finding a $c$-respecting Eulerian trail in de Bruijn graphs over alphabets of size $2$ is $\NP$-complete.

We then shift our focus to \emph{parametrization} aiming to capture the quality of our domain knowledge in the complexity.
Ben{-}Dor et al.~developed an algorithm to solve the problem on de Bruijn graphs in $\BigO(m \cdot \iWidth^{1.5} 4^{\iWidth})$ time, where $m=|E|$ and $w$ is the \emph{maximum interval length} over all edges in $E$. Bumpus and Meeks [Algorithmica 2023] later rediscovered the same algorithm on temporal graphs, which highlights the relevance of this problem in other contexts. Our central contribution is showing how combinatorial insights lead to \emph{exponential-time} improvements over the state-of-the-art algorithm. In particular, for the important class of de Bruijn graphs, we develop an algorithm parametrized by $w (\log w+1) /(k-1)$: for a de Bruijn graph of order $k$, it runs in \(\BigO(mw \cdot 2^{\frac{w (\log w+1)}{k-1}})\) time. Our result improves on the state of the art by roughly an exponent of $(\log w +1)/(k-1)$. The existing algorithms have a natural interpretation for string reconstruction: when for each length-$k$ string, we know a small range of positions it must lie in, string reconstruction can be solved in linear time. Our improved algorithm shows that \emph{it is enough when the range of positions is small relative to $k$}. 

We then generalize both the existing and our novel $\fpt$ algorithm by allowing the cost at every position of an interval to vary. In this optimization version, our hardness result translates into inapproximability and the $\fpt$ algorithms work with a slight extension. Surprisingly, even in this more general setting, we extend the $\fpt$ algorithms to count and enumerate the min-cost Eulerian trails. The counting result has direct applications in the data privacy framework of Bernardini et al.
\end{abstract}

\newpage

\section{Introduction\label{sec:intro}}
One of the most important algorithmic tasks in bioinformatics
is that of \emph{genome assembly} (or \emph{fragment assembly})~\cite{DBLP:journals/jcb/TomescuM17,DBLP:conf/cpm/CairoMART17,DBLP:journals/bib/Medvedev19,DBLP:journals/talg/CairoRTZ24}: the process of taking a large number of short DNA fragments and putting them back together to create a representation of the original chromosomes from which the DNA originated. The textbook reduction of the fragment assembly problem to (variations of) the classical Eulerian trail problem~\cite{Assembly2001} has led to remarkable progress in the past three decades. This reduction is based on the notion of \emph{de Bruijn graph} (dBG, in short).

Let $S=S[1]\ldots S[|S|]=S[1\dd |S|]$ be a \emph{string} of length $|S|$ over an \emph{alphabet} $\Sigma$.
We fix a collection $\mathcal{S}$ of strings over $\Sigma$
and define the \emph{order-$k$ de Bruijn graph} of $\mathcal{S}$ as a directed graph, denoted by $G_{\mathcal{S},k}=(V,E)$, where $V$ is the set of length-$(k-1)$ substrings of the strings in $\mathcal{S}$ and $E$ has an edge $(u,v)$ if and only if $u[1]\cdot v=u\cdot v[k-1]$ and $u[1]\cdot v$ occurs in some string $S$ of $\mathcal{S}$.
In applications, we often consider a de Bruijn \emph{multigraph} where the multiplicity of an edge is exactly the total number of these occurrences in the strings in \(\mathcal S\).
Then, a single Eulerian trail in $G_{\mathcal{S},k}$ represents a \emph{candidate} string reconstruction~\cite{Assembly2001};  see \Cref{fig:dBG}.

\begin{figure}[ht]
    \centering
    \includegraphics[width=0.7\textwidth]{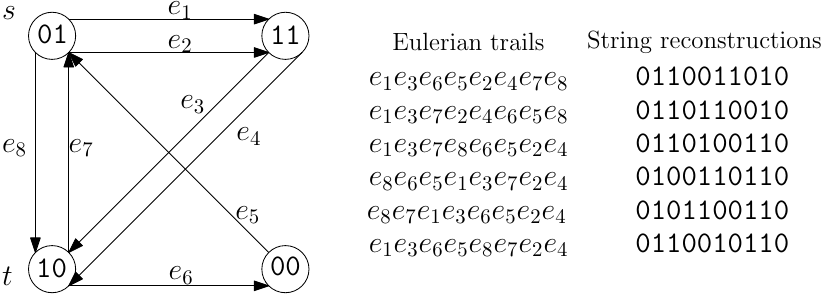}
    \caption{The de Bruijn multigraph $G_{\mathcal{S},k}=(V,E)$ (left), the set of node-distinct Eulerian trails from $s$ to $t$ (middle), and the corresponding set of string reconstructions (right) for
    the string collection $\mathcal{S}= \texttt{001}, \texttt{010}, \texttt{011}, \texttt{011}, \texttt{100}, \texttt{101}, \texttt{110}, \texttt{110}$, over the alphabet $\Sigma=\{\texttt{0},\texttt{1}\}$, and $k=3$.}
    \label{fig:dBG}
\end{figure}

Bernardini et al.~have introduced the complementary idea in data privacy~\cite{DBLP:conf/alenex/0001CFLP20} based on the \emph{$z$-anonymity} privacy property~\cite{DBLP:conf/pods/SamaratiS98,DBLP:journals/ijufks/Sweene02}. Let $S$ be a \emph{private} string that we would like to release for data analysis, preventing, however, its full reconstruction. For a privacy threshold $z>0$, we compute the largest $k$ for which there exist at least $z$ node-distinct Eulerian trails in the order-$k$ dBG of $\mathcal{S}=\{S\}$, and release a string $S'$ obtained via a random Eulerian trail. In \Cref{fig:dBG}, we have $6$ node-distinct Eulerian trails (and thus $6$ distinct strings) corresponding to $\mathcal{S}$. Under the $z$-anonymity assumption, one cannot know which of the $6$ strings is $S$ unless they can rely on some additional information about $S$, such as on \emph{domain knowledge}.

The pressing question arising from these applications is thus: How hard is it to reconstruct a best string from $G_{\mathcal{S},k}$ given a function modeling domain knowledge? 
This function maps every length-$k$ string to an \emph{interval of positions} where it may occur in the reconstructed string. By the above reduction, this translates into a function $c$ mapping every edge to an interval where it may occur in an Eulerian trail, raising the following basic graph problem:

\begin{center}
\emph{Given an instance $(G,c)$, can we efficiently compute an Eulerian trail respecting $c$?}    
\end{center}

Although Hannenhalli et al.~\cite{DBLP:journals/bioinformatics/HannenhalliFLSP96} formalized this basic problem in the context of dBGs, it has applications in temporal graphs~\cite{DBLP:journals/algorithmica/BumpusM23,DBLP:journals/algorithmica/MarinoS23,DBLP:journals/tcs/MichailS16} and other networks~\cite{DBLP:journals/networks/DrorST87,
%DBLP:journals/orl/GhianiI00,DBLP:journals/eor/KortewegV06, 
WANG2002375, DBLP:conf/mfcs/KupfermanV16}. We denote it here by \(\dirProblem\) (see \Cref{sec:setting} for a formal definition). Hannenhalli et al.~showed that \(\dirProblem\) is $\NP$-complete when each node of $G$ has in- and out-degree at most two. Even if their motivating application was fragment assembly and thus based on dBGs, their result was shown on general directed graphs. 
Ben{-}Dor et al.~\cite{DBLP:journals/jcb/Ben-DorPSS02} then showed that
\(\dirProblem\) is \(\NP\)-complete, even on dBGs with \(|\Sigma|=4\).
We settle the negative landscape by showing the following result: 

\begin{restatable*}{theorem}{thmDirHardDebruijnInterval}
The $\dirProblem$ problem is $\NP$-complete, even on de Bruijn graphs with $|\Sigma|=2$.
\label{thm:dir-hard-debruijn-interval}
\end{restatable*}

Beyond theoretically interesting, \Cref{thm:dir-hard-debruijn-interval} shows that \emph{in general} it is indeed hard to reconstruct a private (binary) string from a given dBG, thus providing theoretical justification for the privacy framework introduced by Bernardini et al.~\cite{DBLP:conf/alenex/0001CFLP20}.
Given these negative results, we shift our focus to \emph{parametrization} aiming to capture the quality of our domain knowledge in the complexity. This leads to algorithms that are efficient if the intervals are small~\cite{DBLP:journals/bioinformatics/HannenhalliFLSP96}: every interval length is bounded by a natural number $w$, which we term the \emph{interval width}. We denote this parametrized version of $\dirProblem$ by $\fptProblem$.
Hannenhalli et al.~\cite{DBLP:journals/bioinformatics/HannenhalliFLSP96} showed an algorithm for $\fptProblem$ working in $\BigO(m^{2w+\log(2w)})$ time for any directed graph $G=(V,E)$, with $m=|E|$ and $w=\BigO(1)$. Ben{-}Dor et al.~\cite{DBLP:journals/jcb/Ben-DorPSS02} developed an algorithm to solve the problem on dBGs in $\BigO(m \cdot \iWidth^{1.5} 4^{\iWidth})$ time, for any $w$. Bumpus and Meeks~\cite{DBLP:journals/algorithmica/BumpusM23} later rediscovered the same algorithm on temporal graphs, which highlights the relevance of the \(\dirProblem\) problem in other contexts. We may thus summarize the state of the art in the
following statement. 

\begin{restatable*}[Theorem 11 of~\cite{DBLP:journals/jcb/Ben-DorPSS02}, Theorem 7 of~\cite{DBLP:journals/algorithmica/BumpusM23}]{theorem}{thmFpt}
\label{thm:fpt}
There is an $\BigO(m \cdot \iWidth^{1.5} 4^{\iWidth})$-time algorithm solving the $\fptProblem$ problem.
Therefore, $\fptProblem$ is in $\fpt$.
\end{restatable*}

We observe that the state-of-the-art algorithm does not exploit the dBG structure.
Our central contribution is developing an algorithm for dBGs parametrized by $w (\log w+1) /(k-1)$:

\begin{restatable*}{theorem}{thmFptDbg}
\label{thm:fpt-dbg}
Let \(G\) be a de Bruijn graph of order \(k\) over alphabet \(\Sigma\), \(|\Sigma|=\BigO(1)\). There is an $\BigO(m \cdot \lambda^{\frac{\iWidth}{k-1}+1})$-time algorithm solving the $\fptProblem$ problem, where \(\fptBase \coloneqq \min(\setsize{\Sigma}^{k-1}, 2w-1)\).
\end{restatable*}

\Cref{thm:fpt-dbg} improves on the state of the art by roughly an exponent of $(\log w +1)/(k-1)$. The existing algorithms~\cite{DBLP:journals/bioinformatics/HannenhalliFLSP96,DBLP:journals/jcb/Ben-DorPSS02,DBLP:journals/algorithmica/BumpusM23} all have a natural interpretation for string reconstruction: when for each length-$k$ substring (\emph{$k$-mer}), we know a small range of positions it must lie in, string reconstruction can be solved in linear time. \Cref{thm:fpt-dbg} shows that \emph{it is enough when the range of positions is small relative to the order $k$} of the dBG. 
In particular, we show that for dBGs it is sufficient if $w \log w/(k-1)$ is relatively small, which significantly extends the practical applicability of our technique.
For instance, in bioinformatics, it is standard to use $k=31$~\cite{DBLP:journals/bioinformatics/LimassetFP20} and then we have $|\Sigma|=4$ (the size of the DNA alphabet), which implies an exponential speedup by $\sqrt[30]{\cdot}$. Our approach of improving the $\fpt$ algorithms for \(\dirProblem\) based on combinatorial insights into the structure of the instances suggests further research into closely-related problems; e.g., the \emph{Hierarchical Chinese Postman} (HCP) problem~\cite{DBLP:journals/orl/GhianiI00,DBLP:journals/eor/KortewegV06,DBLP:journals/networks/DrorST87,DBLP:journals/orl/AfanasevBT21} and the related \emph{Time-Constrained Chinese Postman} (TCCP) problem~\cite{WANG2002375,6025623}.

We then generalize the above results by allowing the cost at every position of an interval to vary. We denote this problem here by \(\dirProblemCost\) (see \Cref{sec:setting} for a formal definition).
In this setting, our hardness result (\Cref{thm:dir-hard-debruijn-interval}) translates into inapproximability. 

\begin{restatable*}{corollary}{corDebruijnInapprox}\label{cor:inapprox} 
    If \(\PClass \neq \NP\) there is no constant-factor polynomial-time approximation algorithm for the \(\dirProblemCost\) problem, even on de Bruijn graphs with interval cost functions.
\end{restatable*}

We show that the $\fpt$ algorithms underlying both \Cref{thm:fpt,thm:fpt-dbg} also work in the optimization version with an interval cost function, which we denote by \(\fptProblemCost\).

\begin{restatable*}{corollary}{findingAlgGen}
\label{thm:finding-alg-gen}
Given a \(\fptProblemCost\) instance, there is an $\BigO(m \cdot \iWidth^{1.5} 4^{\iWidth})$-time algorithm finding a min-cost Eulerian trail in \(G\).
On a de Bruijn graph of order \(k\) over alphabet \(\Sigma\), \(|\Sigma|=\BigO(1)\), we can solve this problem in $\BigO(m \cdot \lambda^{\frac{\iWidth}{k-1}+1})$ time, where \(\fptBase \coloneqq \min(\setsize{\Sigma}^{k-1}, 2w-1)\).
\end{restatable*}

Surprisingly, even in this more general setting, we show how to extend our \(\fpt\) techniques to count the number of min-cost Eulerian trails. We show the following result.

\begin{restatable*}{theorem}{thmAlgCounting}
    \label{thm:alg-counting}
    Given a \(\fptProblemCost\) instance, we can count the number of min-cost Eulerian trails in \(\BigO(m\cdot w^{1.5} 4^w)\) time.  
    On a de Bruijn graph of order \(k\) over alphabet \(\Sigma\), \(|\Sigma|=\BigO(1)\), we can solve this problem in $\BigO(m \cdot \lambda^{\frac{\iWidth}{k-1}+1})$ time, where \(\fptBase \coloneqq \min(\setsize{\Sigma}^{k-1}, 2w-1)\).
\end{restatable*} 

We can also enumerate these trails in the same time as for counting (\Cref{thm:alg-counting}) plus time that is linear in the size of the output.
Notably, all of our algorithmic results translate from directed to undirected graphs with the same complexities. It is easy to verify that none of the proofs depend on the directedness of the graph.
For simplicity, we focus our discussion on directed graphs.
Particularly our result for counting Eulerian trails in undirected graphs is surprising given that the problem is \(\SharpP\)-complete in the standard setting~\cite{DBLP:conf/alenex/BrightwellW05}. We also show that most of our algorithmic results generalize to multigraphs.
Finally, we show that the undirected version of $\dirProblemCost$, which we denote by $\undirProblemCost$, is also $\NP$-complete. 

\subparagraph{Overview of Hardness Techniques.}
We consider dBGs over alphabets of size two.
Unlike~\cite{DBLP:journals/bioinformatics/HannenhalliFLSP96,DBLP:journals/jcb/Ben-DorPSS02},
this highly structured setting requires intricate techniques to obtain an (elementary) reduction from the directed Hamiltonian path problem~\cite{DBLP:conf/coco/Karp72}.
For our instances, we harness the fact that the shortest path between any two nodes in a complete dBG is unique and has a meaningful string interpretation.
The structure of the reduction makes it directly translatable to the optimization setting implying inapproximability.
Similarly, we reduce the undirected Hamiltonian path problem~\cite{DBLP:conf/coco/Karp72} to finding a min-cost $c$-respecting Eulerian trail in an undirected graph.
This reduction requires a different approach than the directed case to ensure that critical parts of the graph can only be traversed in the desired order.
We assign different costs at even and odd time steps to certain edges to achieve that goal.
\subparagraph{Overview of Algorithmic Techniques.}
We apply tools from parametrized algorithms~\cite{DBLP:books/sp/CyganFKLMPPS15} to solve \(\fptProblemCost\) efficiently.
Our techniques can be viewed as a careful combination of searching in a well-bounded state space and dynamic programming. We further combine approaches from graph and string algorithms to enhance the vanilla version of these tools with combinatorial insights into dBGs and obtain \underline{exponential-time improvements}. The robustness of these tools allows us to generalize our algorithm for the decision version of \(\fptProblemCost\) to both the optimization and the counting versions. The counting result in particular relies on the fact that the state space compactly captures all possible Eulerian trails, by excluding impossible trails at the construction level, and representing the possible ones efficiently.

\subparagraph{Other Related Work.} Our work is closely related to exploring \emph{temporal graphs}~\cite{DBLP:journals/tcs/MichailS16,DBLP:conf/mfcs/ErlebachS18,DBLP:conf/sirocco/ErlebachS20,DBLP:journals/jcss/Erlebach0K21,DBLP:journals/jcss/AkridaMSR21,DBLP:journals/algorithmica/BumpusM23,DBLP:journals/algorithmica/MarinoS23}: graphs where every edge is available at an arbitrary subset of the time steps. 
Most relevant to our work are perhaps the works by Bumpus and Meeks~\cite{DBLP:journals/algorithmica/BumpusM23} and by Marino and Silva~\cite{DBLP:journals/algorithmica/MarinoS23}.
The former proved that deciding whether a temporal graph has an Eulerian trail is $\NP$-complete even if each edge appears at most $r$ times, for every fixed $r\geq 3$.
They also apply similar parametrized tools to interval-membership-width, a temporal graph parameter related to our interval width, but they do not consider directed graphs, costs, or counting.
Marino and Silva~\cite{DBLP:journals/algorithmica/MarinoS23} showed, for undirected graphs, that, if the edges of a temporal graph $(G,\lambda)$ with lifetime $\tau$ are always available during the lifetime, then deciding whether $(G,\lambda)$ has an Eulerian trail is $\NP$-complete even if $\tau=2$. They also showed that this problem is in \textsf{XP} when parametrized by $\tau+\textsf{tw}(G)$, where $\textsf{tw}(G)$ is the treewidth of $G$. 

Our work is also closely related to the HCP problem~\cite{DBLP:journals/orl/GhianiI00,DBLP:journals/eor/KortewegV06,DBLP:journals/networks/DrorST87,DBLP:journals/orl/AfanasevBT21} (and the related TCCP problem~\cite{WANG2002375,6025623}). In HCP, we are given an edge-weighted undirected graph $G = (V, E)$, a partition $\mathcal{P}$ of $E$ into $k$ classes, and a partial order $\prec$ on $\mathcal{P}$, and we are asked to find a least-weight closed walk traversing each edge in $E$ at
least once such that each edge $e$ in a class $E'$ is traversed
only after all edges in all classes $E'' \prec E'$ are traversed.
The main differences to our problem are two: (1) In HCP, there is a partition on the edges and then a partial order between the classes, whereas we have an arbitrary interval per edge; and (2) in HCP, one must traverse each edge at least once, whereas we have to do this exactly once. 
\subparagraph{Paper Organization.} 
We begin by formalizing $\dirProblem$ and $\dirProblemCost$ in \Cref{sec:setting}.
We present our hardness and inapproximability results in \Cref{sec:hardness}.
We complement these negative results with positive algorithmic results for $\fptProblem$ in \Cref{sec:fpt}.
In \Cref{sec:counting}, we extend our algorithms to counting Eulerian trails
and discuss the relationship to the data privacy applications.
We also show slight extensions to the algorithms of Ben-Dor et al.~and Bumpus and Meeks in \Cref{app:generalizations}.
Finally, we discuss how most of our algorithms extend to multigraphs in \Cref{sec:multigraphs}.

\section{Preliminaries}\label{sec:setting}
For $a, b \in \N$, let $[a, b] \coloneqq \{x \in \N \mid a \le x \le b\}$, $[a, b) \coloneqq \{x \in \N \mid a \le x < b\}$, and $[b] \coloneqq [1,b]$.
Further let \(\intervals{b} \coloneqq \{[i,j] \mid 1 \le i \le j \le b\} \cup \{\{\}\}\) be the set of intervals on \([b]\).

We imagine a trail as ``walking'' through a graph and thus refer to the ranks of the edges as happening at certain time steps (e.g., the second edge is associated with the time step 2).
This vocabulary borrowed from temporal graphs makes the discussion more intuitive.

\begin{definition}
    \label{def:interval-cost}
    Let \(G = (V,E)\) be a directed or undirected graph with $m=|E|$.
    We call a function $c \colon E  \to \intervals{m}$ an \emph{interval function} for $G$.
    We extend this notion to an \emph{interval cost function} where, for each \(e \in E\), each time step from the interval \(c(e)\) is associated with a cost from \(\Z\). 
    As an abuse of notation, we write \(c(e,t)\) for this cost. If $t \not \in c(e)$, we set \(c(e,t) \coloneqq \infty\).  
\end{definition}

When the relevant graph $G=(V,E)$ is clear from context, we set $n\coloneqq|V|$ and $m\coloneqq|E|$. We next formalize the main problems in scope on directed graphs.
\begin{problem}{Eulerian Trails in Digraphs with Interval functions ($\dirProblem$)}
    Given: & Directed graph $G=(V, E)$, interval function $c \colon E  \to \intervals{m}$ \\   
    Decide: & Is there an Eulerian trail $P = e_1\dots e_m$ in $G$ such that for all $t \in [m]$, $t \in c(e_t)$? 
\end{problem} 

Similarly, we consider the version with interval \emph{cost} functions. 

\begin{problem}{Eulerian Trails in Digraphs with Interval Cost functions ($\dirProblemCost$)}
    Given: & Directed graph $G=(V, E)$, interval cost function $c \colon E \times [m]  \to \Z \cup \{\infty\}$, $\cmax \in \N$ \\
    Decide: & Is there an Eulerian trail $P = e_1\dots e_{m}$ in $G$ such that $\sum_{t = 1}^{m} c(e_t,t) \le \cmax$?
\end{problem} 
In addition, we consider the \(\undirProblemCost\) problem as the natural \emph{undirected} version of $\dirProblemCost$.

Note that in all cases, we allow setting every interval to the whole \([m]\). 
In the case without costs, this yields the classical Eulerian trail problem~\cite{hierholzer1873moglichkeit}.
In the case with costs, this yields a min-cost Eulerian trail problem without any interval restrictions, which is a natural special case we will call out when considering it to be particularly relevant.
See \Cref{fig:available-example} for an example of a graph with an interval function and the corresponding edge availabilities.

\begin{figure}[tbhp]
    \centering
    \includegraphics[width=0.65\textwidth, page=1]{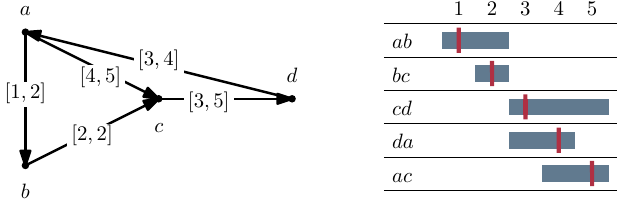}
    \caption{On the left is the input graph $G$: every edge is labeled with the time steps at which it is available. On the right is a table illustrating the interval every edge is available: $abcdac$ is an Eulerian trail; the edge usages corresponding to this trail are indicated with red vertical lines.}
    \label{fig:available-example}
\end{figure}%
The following definition makes our arguments based on an interval function more legible.
\begin{definition}
    \label{def:available}
    We call an edge $e \in E$ \emph{available} at $t \in [m]$, if $t \in c(e)$.
    For a time step $t \in [m]$, we write $E(t)$ to denote the set of edges available at $t$.
    Given an interval $T \subseteq [m]$, we similarly write $E(T)$ for the set of edges available at at least one time step from $T$.
\end{definition}

\section{NP-hardness and Inapproximability}
\label{sec:hardness}
In this section, we consider both the directed and undirected case from the hardness perspective. Since in the directed case, dBGs are a particularly relevant application, we give strong results, even when restricting the instances to this graph class.
We are the first to show the hardness for alphabets of all sizes (except $\setsize{\Sigma} = 1$, where the problem is trivial). 
In particular, we settle the question for the important case of binary alphabets; see \Cref{sec:dir-hard}.
In the undirected case, we consider general graphs, which we believe is nonetheless insightful regarding what makes this problem class hard; see \Cref{sec:undir-hard}.

\subsection{Directed Graphs}\label{sec:dir-hard}
We first settle the computational complexity of $\dirProblem$. 
\thmDirHardDebruijnInterval

We reduce from the directed Hamiltonian path problem, which is $\NP$-complete~\cite{DBLP:conf/coco/Karp72}.
Let \(G=(V, E)\) be the directed graph in which we want to find a Hamiltonian path.

\subparagraph{Construction of the dBG.} We consider any two-letter alphabet; for example, here we will consider the subset \(\Sigma \coloneqq \{\letA, \letT\}\) of the DNA alphabet. We set \(\ell \coloneqq \lceil \log_2(\setsize{V}) \rceil\) and
construct a \(\dirProblem\) instance $(G'=(V',E'),c)$ as the complete dBG on the set of nodes each labeled by a string of length $k-1=4\ell + 10$, 
as well as an interval function $c$ that we will describe later. 
Note that, by the choice of $\ell$ and $\Sigma$, $G'$ has $|\Sigma|^{k-1}=2^{4\ell + 10} = \BigO(\setsize{V}^4)$ nodes. We will pay special attention to some of the nodes in $G'$, which we assign an interpretation in terms of $G$.

For each \(v \in V\), let \(\id v \in \Sigma^\ell\) be a unique identifier among \(V\). By the choice of \(\ell\), this numbering is always possible.
Similarly, let \(\id{v'} \in \Sigma^{4\ell+10} \) be the string associated with a node $v' \in V'$.
With each node \(v \in V\), we associate two nodes \(v'_1\) and \(v'_2\) in $V'$ such that: 
\begin{align*}
    \id{v'_1} &\coloneqq \letA^{\ell + 3} \cdot \letT \cdot \id{v} \cdot \letT \cdot \letA^{\ell+3} \cdot \letT \cdot \id{v} \cdot \letT \text{ and }
    \id{v'_2} \coloneqq \letA^{\ell + 2} \cdot \letT \cdot \id{v} \cdot \letT \cdot \letA^{\ell+3} \cdot \letT \cdot \id{v} \cdot \letT \cdot \letA,
\end{align*}
where $\letA^{i}$ is the string of $i$ $\letA$'s.
We use the notation $a \overset{x}{\to} b$ to denote a directed edge $ab$ labeled $x$, where $x$ is the letter used to obtain $b$ from $a$ in the dBG. 
We extend this notation to
$a\overset{X}{\rightsquigarrow}b$ to denote the path from $a$ to $b$ using string $X$.
Observe that by the choice of the underlying strings, we have an edge \(v'_1 \xrightarrow{\letA} v'_2\) in the dBG.
We call this the \emph{inner edge} for \(v\).
With each edge \(e = vu \in E\), we associate two nodes \(e'_1\) and \(e'_2\) such that:
\begin{align*}
    \id{e'_1} &\coloneqq \letA^{\ell + 3} \cdot \letT \cdot \id{v} \cdot \letT \cdot \letA^{\ell+3} \cdot \letT \cdot \id{u} \cdot \letT \text{ and }
    \id{e'_2} \coloneqq \letA^{\ell + 2} \cdot \letT \cdot \id{v} \cdot \letT \cdot \letA^{\ell+3} \cdot \letT \cdot \id{u} \cdot \letT \cdot \letA.
\end{align*}
Again, we have an edge \(e'_1 \xrightarrow{\letA} e'_2\) in the dBG. We call this the \emph{inner edge} for \(e\).
For every other string of length $4\ell+10$ over $\Sigma$ a node $v'$ exists in $V'$, but it is not associated with a node $v\in V$ or an edge $e\in E$.
See \Cref{fig:hardness} for an illustration of this construction.
A crucial property that we will later utilize is that for every pair of nodes $v,u \in V$, such that $vu\in E$, there is a unique shortest path of length $2\ell + 4$ in $G'$ between $v_2'$ and $(vu)_1'$; this path is associated with the string $\letA^{\ell + 2} \cdot \letT \cdot \id{u} \cdot \letT$ (see \Cref{fig:hardness}). The analogous property holds for $(vu)_2'$ and $u'_1$: there is a unique shortest path of length $2\ell + 4$ in $G'$ from $(vu)_2'$ to $u'_1$.

\begin{figure}[tbhp]
    \centering
    \includegraphics[width=0.8\linewidth,page=5]{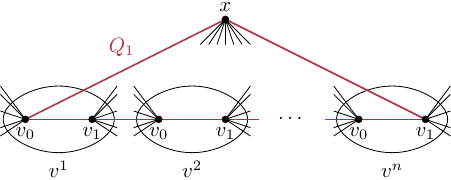}
    \caption{A directed graph $G$ (left) and the corresponding part of graph $G'$ (right). The nodes in $G'$ that correspond to nodes in $G$ are colored black; the nodes in $G'$ that correspond to edges in $G$ are colored red. Inner edges in $G'$ are colored gray and squiggly lines indicate \emph{paths} of length $2\ell +4$.}
    \label{fig:hardness}
\end{figure}

It is straightforward to verify that the claimed path (from \(v_2'\) to \((vu)_1'\)) exists.
To see that this is the shortest possible, notice that \(\id{v_2'}\) contains the substring \(\letA^{\ell + 3}\) starting at position \(2 \ell + 5\) (i.e., after the first occurrence of \(\id{v} \cdot \letT\)).
Any one edge on a path to \((vu)_1'\) shifts this substring left by one letter.
By the strategic placement of \(\letT\) around the \(\id{}\) parts, the substring \(\letA^{\ell +3}\) occurs precisely at positions \(1\) and \(2\ell + 6\) in \(\id{(vu)_1'}\).
Thus, the closest occurrence of \(\letA^{\ell +3}\) left of position \(2\ell + 6\) in \(\id{(vu)_1'}\) starts at position 1 (i.e., requires a left shift by at least \(2\ell +4\) letters). Therefore, any path from \(v_2'\) to \((vu)_1'\) must contain at least \(2 \ell +4\) edges. The argument is analogous for the path from $(vu)_2'$ to $u'_1$.

\subparagraph{Construction of the Interval Function.}
Let us set \(\tau \coloneqq 2n-1+2(n-1)(2\ell+4)\).
We dub the time steps 1 to \(\tau\) \emph{early} and all others \emph{late}.
The intuition is that the first \(\tau\) edges on any $c$-respecting Eulerian trail in \(G'\) will correspond to a Hamiltonian path in \(G\).
For ease of notation, we set \(m'\coloneqq|\Sigma|^{k}=2^{4\ell + 11}\) to be the number of edges in \(G'\).

For the interval function $c$, we set \(c(v'_1 v_2') \coloneqq [1,\tau]\), for all $v\in V$. For any non-edge \(e \not \in E\) we set \(c(e_1' e'_2) \coloneqq [\tau+1, m']\). All other edges in $G'$ have the full interval \([m']\).
This also holds for all inner edges for the edges in $E$. Note that with these choices, the inner edges for nodes \(v \in V\) are only available early, and for a node pair $v,u \in V$, the edge $(vu)'_1 \to (vu)'_2$ is available early only if $v \to u$ is an edge in \(E\).

\begin{lemma}
\label{lem:debruijn-dhp-to-et}
If \(G\) has a Hamiltonian path, there is a $c$-respecting Eulerian trail in \((G', c)\).
\end{lemma}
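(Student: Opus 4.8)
The plan is to take a Hamiltonian path in $G$ and turn it, edge by edge, into an explicit Eulerian trail in $G'$ that respects $c$. Let $v_1 \to v_2 \to \dots \to v_n$ be the Hamiltonian path in $G$. I would construct the \emph{early} portion of the trail first: start at the node $(v_1)'_1$, traverse the inner edge $(v_1)'_1 \xrightarrow{\letA} (v_1)'_2$, then follow the unique shortest path of length $2\ell+4$ from $(v_1)'_2$ to $(v_1 v_2)'_1$ (recall $v_1 v_2 \in E$, so this path exists and is labeled $\letA^{\ell+2}\letT\cdot\id{v_2}\cdot\letT$), then the inner edge $(v_1 v_2)'_1 \xrightarrow{\letA} (v_1 v_2)'_2$, then the unique shortest path of length $2\ell+4$ from $(v_1 v_2)'_2$ to $(v_2)'_1$. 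Iterating this pattern "inner edge of node $v_i$; path to $(v_i v_{i+1})'_1$; inner edge of edge $v_i v_{i+1}$; path to $(v_{i+1})'_1$" across the whole Hamiltonian path consumes exactly $n$ node-inner-edges, $n-1$ edge-inner-edges, and $2(n-1)$ connecting paths of length $2\ell+4$, for a total of $2n-1+2(n-1)(2\ell+4) = \tau$ edges. So the early phase uses exactly the time steps $[1,\tau]$, and the only edges with restrictive early-only intervals — the node inner edges $(v)'_1(v)'_2$ — are all used within $[1,\tau]$, as required; similarly every non-edge inner edge $(e)'_1(e)'_2$ for $e\notin E$ is \emph{not} used in this phase, so it is free to be placed late.

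Next I would argue the early walk is a genuine trail (no repeated edge). This is where the string gymnastics pays off: the node identifiers $\id{v}$ are distinct, and the surrounding $\letT$-delimited structure makes every $k$-mer appearing along these $\tau$ edges distinct — in particular the node inner edges, the edge inner edges, and the intermediate $k$-mers on the connecting paths are pairwise distinct because they encode a distinct (ordered) pair of identifiers or a distinct shift-position of an $\letA$-block flanked by a distinct identifier. Since the Hamiltonian path visits each $v\in V$ once and each directed edge $v_iv_{i+1}$ once, no node inner edge and no edge inner edge is repeated, and the connecting paths between consecutive such gadgets are internally disjoint by the uniqueness-of-shortest-path fact already established in the excerpt. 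Thus the first $\tau$ edges form a valid trail ending at $(v_n)'_1$.

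For the late phase, I would invoke a standard Eulerian-completion argument: $G'$ is the complete de Bruijn graph of order $k$ over $\Sigma$, which is Eulerian (every node has in-degree equal to out-degree $=|\Sigma|$, and it is connected), hence so is any "contracted" structure; the walk constructed so far is a trail using a subset of the edges, and I want to extend it to use all remaining $m' - \tau$ edges. The cleanest route is: contract each traversed connecting path to a single edge and delete the traversed inner edges, observe the resulting multigraph $H$ on the remaining edges still has all in-degrees equal to out-degrees (we removed a closed-up trail's worth of edges... actually an \emph{open} trail, so the endpoints $(v_1)'_1$ and $(v_n)'_1$ have a degree imbalance of exactly one, matching the trail's start/end), and $H$ is connected because the complete dBG is robustly connected and we only removed a sparse trail. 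Then any Eulerian trail of $H$ starting at $(v_n)'_1$ appended to our early walk gives an Eulerian trail of $G'$; since all late edges (including all non-edge inner edges and all non-restrictive edges) have intervals containing every late time step, the concatenation respects $c$. The main obstacle is the bookkeeping in this last step — verifying that removing the $\tau$-edge trail leaves a \emph{connected} subgraph with the right degree sequence, which requires checking that no node of $G'$ is isolated or has all its edges consumed by the early walk; this follows because each node of $G'$ has out-degree $2$ and the early walk, being a trail, uses at most one outgoing edge per visited node and visits only $O(n\ell)$ of the $\Theta(n^4)$ nodes, but it needs to be stated carefully, ideally by checking that for each node $v'$ at least one outgoing edge (e.g., the one labeled by the letter not used by the early trail, or any outgoing edge if $v'$ is unvisited) remains and that these remaining edges keep $G'$ connected.
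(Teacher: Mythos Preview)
Your approach is essentially the paper's: build the length-$\tau$ prefix $P'_1$ from the Hamiltonian path, argue edge-disjointness via the string structure, then complete via an Eulerian trail in $G'-P'_1$ using the degree/connectivity check. One small slip worth fixing: the early trail ends at $(v_n)'_2$ (after the last node-inner edge), not $(v_n)'_1$, so the late trail should start at $(v_n)'_2$; also, your appeal to ``uniqueness of shortest path'' does not by itself give disjointness of \emph{different} connecting paths---the paper pins this down by observing that every intermediate node on a connecting path out of $v_i$ carries the substring $\letA^{\ell+3}\letT\,\id{v_i}\,\letT$ in a fixed position, so it can only recur on another out-path of the same $v_i$, which a Hamiltonian path forbids.
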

\begin{proof}
Let \(P = v_1 \dots v_n\) be a Hamiltonian path in \(G\). Then for the Eulerian trail in \(G'\), set
$$
P'_1 \coloneqq (v_1)'_1 \overset{\letA}{\to} (v_1)'_2 \overset{\letA^{\ell + 2} \cdot \letT \cdot \id{v_2} \cdot \letT}{\rightsquigarrow} (v_1v_2)_1' \overset{\letA}{\to} (v_1v_2)_2' \rightsquigarrow \dots \rightsquigarrow (v_n)_1' \overset{\letA}{\to} (v_n)_2'.
$$
Notice that \(P_1'\) traverses all inner edges for nodes and that \(P'_1\) has length exactly \(\tau\).
Thus all inner edges for nodes are available by construction at their time steps on \(P'_1\).
Similarly, for any \(i \in [n-1]\), we have that \(v_iv_{i+1} \in E\) since \(P\) is a Hamiltonian path in $G$. Thus, the inner edges for edges are available on \(P'_1\) as well. Finally, all other edges on \(P'_1\) have their $\letA^{\ell+3}$ substring in such positions that they do not fall under any of the edge classes whose interval is restricted, thus they are available for the full interval $[m']$. We conclude that \(P'_1\) respects $c$.

As we want \(P'_1\) to be the prefix of an Eulerian trail, we must ensure that no edge repeats in \(P'_1\).
The nodes of types $v_1'$, $v_2'$, $e_1'$, and $e_2'$ in \(G'\), for some $v \in V$ or $e \in E$, each appears at most once on $P_1'$ by construction, and thus so must their inner edges.
Therefore, if there is an edge $xy$ in \(G'\), for two nodes $x$ and $y$, that appears twice on $P_1'$, it must be on some subpath of type $(v_i)'_2 \overset{\letA^{\ell + 2} \cdot \letT \cdot \id{v_{i+i}} \cdot \letT}{\rightsquigarrow} (v_iv_{i+1})'_1$ (or analogously of the type $(v_iv_{i+1})'_2 \overset{\letA^{\ell + 2} \cdot \letT \cdot \id{v_{i+1}} \cdot \letT}{\rightsquigarrow} (v_{i+1})'_1$) as well as at some other part of $P_1'$.
Let us analyze the string corresponding to node $x$. As $x$ appears on the path from $(v_i)'_2$  to $(v_iv_{i+1})'_1$, it must have the form $r_1 \cdot \letA^{\ell + 3} \cdot \letT \cdot \id{v_i} \cdot \letT \cdot r_2$, where $r_1$ is a suffix of $\letA^{\ell + 2} \cdot \letT \cdot \id{v_i} \cdot \letT$ and $r_2$ is a prefix of $\letA^{\ell+3} \cdot \letT \cdot \id{v_{i+1}} \cdot \letT$. Consider that therefore the string corresponding to $x$ contains the substring $\letA^{\ell + 3} \cdot \letT \cdot \id{v_i} \cdot \letT$ somewhere in the middle.
Since, however, our id's are unique, $x$ may only appear on the subpath corresponding to an out-edge of $v_i$ in $G$ and a Hamiltonian path can only include one out-edge per node. Finally, no edge can repeat within the same such subpath as this is a shortest path. The above discussion contradicts our assumption that $xy$ repeats on $P_1'$. Therefore, we conclude that all edges on $P_1'$ are unique.

We are left to show that we can complete \(P_1'\) to traverse every edge of $G'$ exactly once.
Observe that since \(G'\) is a complete dBG, the in- and out-degree of every node is \(\setsize{\Sigma} = 2\).
Therefore, in \(G' - P_1'\) (which we define as $G'$ without the edges in $P_1'$) the in- and out-degree of every node is the same except for \((v_1)_1'\) (where the out-degree is one less than the in-degree) and for \((v_n)_2'\) (where the out-degree is one more than the in-degree). 
Also notice that $G'-P'_1$ remains weakly connected (as any strongly connected $2$-regular directed graph remains weakly connected if a path is removed). By these properties, there is an Eulerian trail \(P'_2\) in \(G'-P'_1\). In particular, it must start at $(v_n)_2'$ and end at $(v_1)_1'$. Hence $P' \coloneqq P_1' P_2'$ is an Eulerian trail in \(G'\).
As the time steps of all edges from \(P_1'\) remain the same, that prefix of $P'$ still respects $c$.
For the suffix of $P'$ defined by \(P'_2\), observe that all edges are used at time steps later than \(\tau\) and that none of the edges are inner edges for nodes.
Thus, all edges in the \(P_2'\) part of \(P'\) are also available under $c$.
We conclude that \(P'\) is an Eulerian trail in \((G', c)\).
\end{proof}

Similar ideas allow us to prove the other direction, yielding the following lemma.

\begin{lemma}
\label{lem:debruijn-et-to-dhp}
    If there is a $c$-respecting Eulerian trail in \(G'\), \(G\) has a Hamiltonian path.
\end{lemma}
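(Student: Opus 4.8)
The plan is to run the argument of \Cref{lem:debruijn-dhp-to-et} in reverse. Let $P' = e_1\dots e_{m'}$ be a $c$-respecting Eulerian trail in $G'$. Since $c(v_1' v_2') = [1,\tau]$ for every $v\in V$, all $n$ inner edges for nodes lie among $e_1,\dots,e_\tau$; being an Eulerian trail, $P'$ uses each exactly once, so they occupy distinct time steps $t_1<\dots<t_n$, and writing $e_{t_i} = (u_i)_1' \to (u_i)_2'$ makes $(u_1,\dots,u_n)$ an enumeration of $V$. For $i\in[n-1]$ let $Q_i \coloneqq e_{t_i+1}\dots e_{t_{i+1}-1}$, a walk in $G'$ from $(u_i)_2'$ to $(u_{i+1})_1'$. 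It suffices to prove $u_iu_{i+1}\in E$ for all $i$, since then $u_1u_2\dots u_n$ is a Hamiltonian path in $G$.

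First I would pin down $\setsize{Q_i}$. Recall, as used just after the construction, that in a complete dBG the shortest path between two nodes is unique: the string spelled by a minimum-length walk must be the source label followed by the shortest suffix of the target label that makes the concatenation end in the target, so its length is $k-1$ minus the longest overlap of a suffix of the source label with a prefix of the target label. For $(u_i)_2'$ and $(u_{i+1})_1'$ that overlap is $1$: the label of $(u_{i+1})_1'$ begins with $\letA^{\ell+3}$ while that of $(u_i)_2'$ ends with $\dots\letT\letA$; an overlap of length $2\ell+6$ would force $\id{u_i}=\id{u_{i+1}}$ (impossible, as $u_i\neq u_{i+1}$), and every intermediate length aligns a $\letT$ against an $\letA$ because each identifier block $\id{\cdot}$ is flanked by $\letT$'s and has length $\ell<\ell+3$. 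Hence $\setsize{Q_i}\ge d((u_i)_2',(u_{i+1})_1') = 4\ell+9$. Writing $e_1\dots e_\tau$ as the concatenation of $e_1\dots e_{t_1-1}$, the $n$ node inner edges, $Q_1,\dots,Q_{n-1}$, and $e_{t_n+1}\dots e_\tau$ gives $\tau = (t_1-1)+n+\sum_i\setsize{Q_i}+(\tau-t_n)$, hence $\sum_i\setsize{Q_i}\le\tau-n$. Since $\tau = 2n-1+2(n-1)(2\ell+4) = n+(n-1)(4\ell+9)$, both inequalities are tight: $\setsize{Q_i}=4\ell+9$ for every $i$ (and incidentally $t_1=1$, $t_n=\tau$).

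Because $\setsize{Q_i}$ equals the distance and the shortest path is unique, $Q_i$ is that shortest path, and it spells $\letA^{\ell+2}\letT\id{u_i}\letT\letA^{\ell+3}\letT\id{u_i}\letT\letA^{\ell+3}\letT\id{u_{i+1}}\letT\letA^{\ell+3}\letT\id{u_{i+1}}\letT$. A direct check on the positions of the three $\letA^{\ell+3}$-runs shows that the labels $\id{(u_iu_{i+1})_1'}$ and $\id{(u_iu_{i+1})_2'}$ occur inside this string at consecutive offsets ($2\ell+4$ and $2\ell+5$). Hence $Q_i$ traverses the edge $(u_iu_{i+1})_1'\to(u_iu_{i+1})_2'$ — a genuine edge of the complete dBG $G'$ — at time step $t_i+2\ell+5 < t_{i+1}\le\tau$, i.e.\ early. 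But the construction gives every non-edge $e\notin E$ the interval $c(e_1' e_2') = [\tau+1,m']$, so this is impossible unless $u_iu_{i+1}\in E$, as required.

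The main obstacle is the overlap computation in the second step. The construction is built with no slack: a distance below $4\ell+9$ would break the counting identity, and a distance above it is already ruled out by that identity, so one must establish it is \emph{exactly} $4\ell+9$. This requires the same careful but elementary case analysis over overlap lengths already carried out after the construction for the $v_2'$-to-$(vu)_1'$ path; the two load-bearing observations are that any common suffix/prefix must begin with $\letA$ and hence start inside one of the two maximal $\letA$-runs of the label of $(u_i)_2'$, and that $\id{u_i}\letT$ is never a run of $\ell+3$ copies of $\letA$, so the $\letA^{\ell+3}$-prefix of $(u_{i+1})_1'$ cannot be matched past its first letter. I would isolate this as a short self-contained claim.
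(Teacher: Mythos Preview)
Your proposal is correct and follows essentially the same approach as the paper: order the node inner edges by their time steps, lower-bound each connecting subtrail by the dBG shortest-walk distance $4\ell+9$ (via the overlap-$1$ computation), use the identity $\tau=n+(n-1)(4\ell+9)$ to force equality everywhere, and then read off the inner edge $(u_iu_{i+1})_1'\to(u_iu_{i+1})_2'$ from the unique shortest walk to conclude $u_iu_{i+1}\in E$. Your treatment is in fact slightly more explicit than the paper's—you spell out the counting identity with the prefix/suffix slack (yielding $t_1=1$, $t_n=\tau$) and write the full length-$(8\ell+19)$ string to locate the edge inner edge—whereas the paper simply names the shortest path as the concatenation of the two length-$(2\ell+4)$ pieces already analyzed after the construction.
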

\begin{proof}
Let \(P'\) be a $c$-respecting Eulerian trail in \(G'\).
If \(P'\) has the exact structure as in the proof of \Cref{lem:debruijn-dhp-to-et}, the result is immediate.
As we cannot (without further proof) make such assumptions upon the structure of \(P'\), this direction needs a slightly more careful argument.
Yet, by our choice of the interval function we obtain the needed structural properties.

First, let \(v_1, \dots, v_n\) be the nodes from \(V\) in the order that the edges \((v_i)_1' \to (v_i)_2'\) appear on \(P'\).
Since \(P'\) respects $c$, all of these inner edges for nodes must have time steps at most \(\tau\).
For any \(i \in [n-1]\), there must be a subtrail from \((v_i)_2'\) to \((v_{i+1})_1'\).
By looking at their strings, we see that the longest suffix of \(\id{(v_i)_2'}\) that is a prefix of \(\id{(v_{i+1})_1'}\), is just $\letA$.
Thus, the path \[(v_i)_2' \overset{\letA^{\ell + 2} \cdot \letT \cdot \id{v_{i+1}} \cdot \letT}{\rightsquigarrow} (v_iv_{i+1})_1' \overset{\letA}{\to} (v_iv_{i+1})_2' \overset{\letA^{\ell + 2} \cdot \letT \cdot \id{v_{i+1}} \cdot \letT}{\rightsquigarrow} (v_{i+1})'_1\] is the unique shortest path between these two nodes.
This path has length \(2(2\ell+4)+1\).
As there are \(n\) edges of type \((v_i)_1' \to (v_i)_2'\), we must have \(n-1\) such connecting paths between them.
As \(n + (n-1)\cdot(2(2\ell+4)+1) = \tau\), we just have enough time to traverse \((v_n)_1' \to (v_n)_2'\) in time (that is before time step \(\tau\)), if we assume the intermediate subtrails as the shortest possible, so since \(P'\) respects $c$, that must be precisely the structure of the length-$\tau$ prefix of \(P'\).
Therefore, the inner edges for nodes appear on \(P'\) connected to each other by a path containing  an inner edge for edges.
Since such inner edges for (potential) edges are available only if the corresponding edge in \(G\) exists, for all \(i \in \), we have that \(v_iv_{i+1} \in E\).
Thus, \(P \coloneqq v_1 \dots v_n\) is a Hamiltonian path in \(G\), as desired.
\end{proof}

\Cref{lem:debruijn-dhp-to-et} and \Cref{lem:debruijn-et-to-dhp} allow us to deduce \Cref{thm:dir-hard-debruijn-interval}.

\thmDirHardDebruijnInterval*
\begin{proof}
As each node in $G'$ is associated with a string of length $4\lceil \log \setsize{V} \rceil + 10$, it has at most $2^{4 \log \setsize{V} + 11} = 2048 \setsize{V}^4$ nodes.
Thus, the construction of $G'$ is clearly possible in polynomial time, and
the statement follows from \Cref{lem:debruijn-dhp-to-et,lem:debruijn-et-to-dhp}.

To see that the problem is in \(\NP\) consider the canonical certificate that encodes the desired trail as a sequence of nodes.
\end{proof}

\subparagraph{Consequences.} The construction we used is quite restrictive in that it uses only two letters to construct the dBG and is within the framework of interval functions without costs.
This allows us to deduce the following two corollaries, narrowing down the hardness landscape of the general problem even further. \Cref{cor:2-regular} was first shown by Hannenhalli et al.~\cite{DBLP:journals/bioinformatics/HannenhalliFLSP96}.

\begin{corollary}[\cite{DBLP:journals/bioinformatics/HannenhalliFLSP96}]\label{cor:2-regular}
    The $\dirProblem$ problem remains $\NP$-hard even on 2-regular graphs.
\end{corollary}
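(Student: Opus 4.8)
The plan is to observe that no new construction is needed: the digraph $G'$ produced by the reduction behind \Cref{thm:dir-hard-debruijn-interval} is \emph{already} 2-regular, so the corollary follows by simply rereading that reduction through the lens of the graph class.

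First I would recall that $G'$ is defined as the \emph{complete} de Bruijn graph of order $k$ over the binary alphabet $\Sigma$. In such a graph every node is a string in $\Sigma^{k-1}$; it has exactly $\setsize{\Sigma}=2$ out-edges (one for each letter that can be appended) and exactly $\setsize{\Sigma}=2$ in-edges (one for each letter that can be prepended). Hence every node of $G'$ has in-degree and out-degree exactly two, i.e.\ $G'$ is 2-regular. This is precisely the degree property already invoked in the proof of \Cref{lem:debruijn-dhp-to-et} when reasoning about $G'-P_1'$; there it is used as an inequality, but it is in fact an equality at every node.

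Next I would note that an interval function is an arbitrary map $E' \to \intervals{m'}$ and imposes no constraint on the underlying graph, so $(G',c)$ is a bona fide instance of $\dirProblem$ whose digraph is 2-regular. Since, by \Cref{thm:dir-hard-debruijn-interval} (via \Cref{lem:debruijn-dhp-to-et,lem:debruijn-et-to-dhp}), the map $G \mapsto (G',c)$ is a polynomial-time reduction from the directed Hamiltonian path problem to $\dirProblem$, it is in particular a polynomial-time reduction from an $\NP$-hard problem to $\dirProblem$ restricted to 2-regular digraphs. Therefore that restriction is $\NP$-hard, which is exactly \Cref{cor:2-regular}.

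The only thing requiring verification is that the degree bound used inside \Cref{lem:debruijn-dhp-to-et} is an equality for \emph{every} node of $G'$, and this is immediate from completeness of the dBG; so there is essentially no obstacle. (Had the reduction instead used a non-complete dBG, one would have had to argue that the instance can be padded back to 2-regularity while preserving the answer, but that situation does not arise here.)
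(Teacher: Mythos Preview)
Your proposal is correct and matches the paper's own proof essentially verbatim: the paper simply observes that the construction behind \Cref{thm:dir-hard-debruijn-interval} uses only the two letters $\{\letA,\letT\}$, so the complete de Bruijn graph $G'$ is 2-regular. Your additional remarks about in/out-degrees and the interval function are fine elaborations but add nothing beyond the paper's one-line argument.
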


\begin{proof}
    Notice that the construction for \Cref{thm:dir-hard-debruijn-interval} uses only the letters \(\{\letA, \letT\}\) and thus the complete dBG (over that alphabet) is 2-regular.
\end{proof}

Our hardness reduction also directly implies that \(\dirProblemCost\) is $\NP$-hard: we set the costs of every edge to \(0\) within its interval and to $1$ outside its interval.
We also set \(\cmax \coloneqq 0\).
As our proof of \Cref{thm:dir-hard-debruijn-interval} relies only on distinguishing zero and non-zero cost instances, we deduce the following inapproximability result, which is to the best of our knowledge new.

\corDebruijnInapprox

\begin{proof}
    Assume that there is an approximation algorithm that solves this variant of the problem with a constant factor \(c \in \R^+\).
    Now, clearly, if there is a zero-cost Eulerian trail in the constructed instance, the approximation algorithm must also output zero.
    And conversely, if there is no such zero-cost Eulerian trail then the approximation algorithm must output some non-zero answer.
    In this way, we can employ the approximation algorithm to solve the decision problem, implying that it does not exist unless \(\PClass = \NP\).
\end{proof}

Interestingly, the problem remains hard and inapproximable, even when restricting \(\dirProblemCost\) to instances where every interval is the full interval \([m]\) (effectively dropping the interval requirement and only considering edge costs).
This reduction simply extends, for each edge, its interval and let all costs outside the original interval be \(\cmax + 1\). 

Although we can find an Eulerian trail in an undirected graph in linear time by Hierholzer's algorithm~\cite{hierholzer1873moglichkeit}, counting the number of Eulerian trails in an undirected graph is \(\SharpP\)-complete~\cite{DBLP:conf/alenex/BrightwellW05}.
Given this complexity landscape and our results on directed graphs, it is natural to ask about the complexity of deciding if an Eulerian trail of at most some fixed cost exists.
We show that this problem is \(\NP\)-complete and inapproximable. 

\subsection{Undirected Graphs}\label{sec:undir-hard}

In this section, we show that $\undirProblemCost$ is \(\NP\)-complete and inapproximable.
We reduce from the undirected Hamiltonian path problem, which is famously known to be $\NP$-complete~\cite{DBLP:conf/coco/Karp72}.
Let $G = (V,E)$ be any undirected graph and let $n = |V| \ge 3$.
We will construct a $\undirProblemCost$ instance $(G' = (V', E'), c)$ as follows.
Let $V' = (V\times \{1,2\}) \cup \{x\}$ where $x\not \in V$ is a new node.
Thus we duplicate every node $v$ of $G$ into two nodes (which we denote by $v_1'$ and $v_2'$). 
Let $E' = {V' \choose 2}$, that is, let $G'$ be the complete graph on $V'$.
Set \(m' \coloneqq \setsize{E'}\).

As stated in \Cref{thm:undir-hard}, we show the result even when all intervals are complete (i.e., \([m']\)). For the cost, intuitively, we categorize every edge as \emph{cheap} (cost zero) or \emph{expensive} (cost non-zero) based on two factors: (1) whether we want it to be taken as part of the Hamiltonian path; and (2) whether we want the edge to be taken at an even or odd timestamp.

In our construction, a Hamiltonian path in $G$ corresponds to a sequence of cheap edges in $G'$ that alternates between taking \emph{inner} edges (i.e., ones between $v'_1$ and $v'_2)$ and \emph{external} edges (i.e., ones connecting different nodes from $G)$. In $G'$, this cheap path can then be extended to a cheap Eulerian trail.
The construction will similarly ensure that every cheap Eulerian trail in $G'$ must have a prefix that corresponds to a Hamiltonian path in $G$. For $v,w \in V$, $i,j \in \{1,2\}$ and $t \in [|E|]$, an edge $\{v_i', w_j'\}$ has cost zero at $t$ if and only if:
\begin{enumerate}
    \item it is an inner edge (i.e., $v=w$), $t$ is even and $t<2n+1$; 
    \item there is an edge $\{v,w\} \in E$, $t$ is uneven and $t \in [3, 2n+1)$; or
    \item it is not an inner edge (i.e., $v\ne w$) and $t > 2n+1$.
\end{enumerate}
Regarding the special node $x$, set all adjacent edges to always have cost zero, that is $c(\{x, v_i'\}, t) \coloneqq 0$.
All other costs are set to 1.
The construction from a Hamiltonian path to an Eulerian trail is illustrated in \Cref{fig:ucET-construction}.

\begin{figure}[hbtp]
    \centering
    \includegraphics[width=0.5\textwidth]{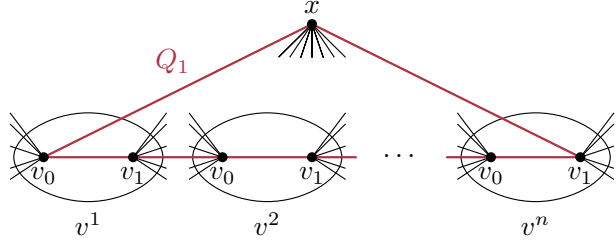}
    \caption{The zero-cost cycle $Q_1 \subseteq E'$ constructed from a Hamiltonian path in $G$.}
    \label{fig:ucET-construction}
\end{figure}

\begin{theorem}
\label{thm:undir-hard}
The $\undirProblemCost$ problem is \(\NP\)-complete, even when all edges are always available.
\end{theorem}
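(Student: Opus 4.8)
The plan is to prove the two directions of the reduction as separate lemmas, mirroring the structure used for the directed case. For membership in $\NP$, the canonical certificate is the Eulerian trail written as a sequence of nodes, whose total cost is checkable in polynomial time; so the bulk of the work is the $\NP$-hardness. We reduce from undirected Hamiltonian path. Given $G=(V,E)$ with $n=|V|\ge 3$, we have already built the complete graph $G'$ on $V'=(V\times\{1,2\})\cup\{x\}$ together with the cost function $c$ (all intervals full, costs $0/1$ as specified by conditions (1)--(3) and the rule on $x$). We set $\cmax\coloneqq 0$, so a ``$c$-respecting'' trail is exactly a \emph{zero-cost} Eulerian trail.

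\textbf{From a Hamiltonian path to a zero-cost Eulerian trail.} Suppose $P=v^{(1)}\dots v^{(n)}$ is a Hamiltonian path in $G$. The idea is to first lay down a zero-cost walk of length $2n+1$ using only the time steps $[1,2n+1]$, alternating inner edges (even steps $2,4,\dots,2n$, all of cost $0$ by condition (1) since $2\le t\le 2n<2n+1$) and external edges corresponding to $P$'s edges (odd steps $3,5,\dots,2n-1$, of cost $0$ by condition (2) since the relevant $\{v^{(i)},v^{(i+1)}\}\in E$ and $t\in[3,2n+1)$); the remaining odd boundary steps $1$ and $2n+1$ are handled by edges incident to the special node $x$, which always cost $0$. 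Concretely, the prefix walks $x \to (v^{(1)})_1' \to (v^{(1)})_2' \to (v^{(2)})_1' \to (v^{(2)})_2' \to \dots \to (v^{(n)})_1' \to (v^{(n)})_2' \to x$, using step $1$ for the first $x$-edge, inner edges at even steps, external $P$-edges at the intervening odd steps, and step $2n+1$ for the final $x$-edge; one checks the parities line up. Call this closed walk $Q_1$ (a cycle, as in \Cref{fig:ucET-construction}); it uses each of its $2n+1$ edges once. After step $2n+1$, every remaining edge of $G'$ is non-inner, so by condition (3) it has cost $0$ at every time step $>2n+1$. Since $G'$ is complete on $|V'|=2n+1$ vertices, $G'$ is connected and every vertex has even degree $2n$, so $G'$ is Eulerian; removing the cycle $Q_1$ leaves a connected (one checks complete graphs minus a Hamiltonian-type cycle stay connected for $n\ge 3$) graph in which every vertex has even degree, so it has an Eulerian circuit $Q_2$ starting and ending at $x$. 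Concatenating, $Q_1 Q_2$ is an Eulerian circuit of $G'$, hence an Eulerian trail; all of $Q_1$ costs $0$ by construction and all of $Q_2$ costs $0$ by condition (3). So $(G',c)$ has a zero-cost Eulerian trail.

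\textbf{From a zero-cost Eulerian trail to a Hamiltonian path.} This is the main obstacle and needs the parity trick to be genuinely forced. Let $P'=e_1\dots e_{m'}$ be a zero-cost Eulerian trail. Look at the first $2n+1$ time steps. At an even step $t\le 2n$, the only zero-cost edges are inner edges and $x$-edges (conditions (1) and the $x$-rule); at an odd step $t\in[3,2n+1)$, the only zero-cost edges are $E$-induced external edges and $x$-edges. There are exactly $n$ inner edges in $G'$, and each must be used at some even time step in $\{2,\dots,2n\}$ (after step $2n+1$, inner edges have cost $1$; before step $2n+1$ they can only be used at even steps), and there are exactly $n$ such even steps — so all inner edges occur, each exactly once, at the even steps $2,4,\dots,2n$. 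Consequently the odd steps $3,5,\dots,2n-1$ (the $n-1$ odd steps strictly between $2$ and $2n+1$, other than $1$) each carry an edge of cost $0$ that is \emph{not} an inner edge and not forced to be an $x$-edge; since the trail is a single walk and consecutive edges share a vertex, between two consecutive inner edges $\{(v)_1',(v)_2'\}$ and $\{(w)_1',(w)_2'\}$ the connecting edge at the odd step must join a copy of $v$ to a copy of $w$ with $v\ne w$, hence by condition (2) requires $\{v,w\}\in E$. (Here one must rule out $x$ appearing in the middle of this prefix: the only way to reach/leave $x$ at a zero cost is via an $x$-edge, and threading $x$ into the interior would break the alternation of inner and external edges within the $2n+1$-step budget — this parity/length counting is exactly where the even-vs-odd cost assignment does the work, and is the step I expect to be fiddly.) Reading off the vertices $v$ of $G$ in the order their inner edges appear on $P'$ therefore yields a walk $v^{(1)},\dots,v^{(n)}$ in $G$ that visits each vertex exactly once (each inner edge occurs exactly once), i.e.\ a Hamiltonian path in $G$.

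\textbf{Conclusion.} The construction of $(G',c)$ takes polynomial time ($|V'|=2n+1$, $|E'|=\binom{2n+1}{2}$, and each cost is computed by checking a $O(1)$ condition). Combining the two directions, $G$ has a Hamiltonian path iff $(G',c)$ has a zero-cost Eulerian trail, so $\undirProblemCost$ is $\NP$-hard; together with membership in $\NP$ it is $\NP$-complete, and since all intervals used are the full $[m']$, it is hard even when all edges are always available. As in \Cref{cor:inapprox}, because the reduction only distinguishes cost-$0$ from positive-cost instances, no constant-factor polynomial-time approximation exists unless $\PClass=\NP$.
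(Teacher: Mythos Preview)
Your proposal is correct and follows essentially the same approach as the paper: build the zero-cost cycle $Q_1$ from the Hamiltonian path and complete it via an Eulerian circuit of $G'-Q_1$ through $x$, and in the converse use the parity of the cost function to force inner edges onto the $n$ even steps $2,\dots,2n$ and $x$-edges onto steps $1$ and $2n+1$. The ``fiddly'' step you flag is in fact immediate once the inner edges are pinned to the even positions: both endpoints of every even-step edge are of the form $v_\star'$, so the odd-step edge sandwiched between two inner edges cannot touch $x$ and must be an $E$-induced external edge.
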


\begin{proof}
To prove the validity of this reduction, first assume that $G$ has a Hamiltonian path $v^1v^2\dots v^n$.
Then
$$
Q_1 \coloneqq x v'^1_1 v'^1_1 v'^2_1 v'^2_2 \dots v'^n_1 v'^n_2 x
$$
is a cycle in $G'$ of cost zero as all edges are all on or before timestamp $2n+1$ and are
(1) inner edges for a node from $V$ and at an even timestamp;
(2) between nodes corresponding to two different nodes in $G$ sharing an edge and at an odd timestamp; or
(3) adjacent to $x$.

Observe that $G'$ is Eulerian as it is the complete graph on an odd number of nodes.
Also, $G' - Q_1$ remains connected as every node, except $v'^1_1$ and $v'^n_2$, has an edge to $x$.
Since $n \ge 3$, the two remaining nodes have at least two edges in $G' - Q_1$, thus at least one to a node that has an edge to $x$.
Thus $G' - Q_1$ is still Eulerian.
Let $Q_2$ be an Eulerian cycle in $G' - Q_1$.
Without loss of generality, assume that $Q_2$  starts at $x$.
Then $Q \coloneqq Q_1 Q_2$ is an Eulerian trail in $G'$.
In $Q$, all edges from $Q_2$ have timestamp at least $2n+2$ and none of them are inner edges (as all of the inner edges are in $Q_1$).
Thus, $Q$ has a total cost of $0$ and we have $(G', c, 0) \in\undirProblemCost$ as desired.

For the other direction, now assume $(G', c,0) \in \undirProblemCost$.
Thus, there is an Eulerian trail $Q$ in $G'$ of cost $0$.
Therefore, all edges in $Q$ must have cost $0$.
Let $Q_1$ be the first $2n+1$ edges of $Q$.
By definition of $c$ the first and last edge in $Q_1$ must be adjacent to $x$ (otherwise they would not have cost 0).
As the second and $2n$-th edge must be an inner inner edge to have cost $0$, we can deduce that the first and last node in $Q_1$ are $x$.
Similarly, in order for the $n$ inner edges to have cost zero, they must occupy all even timestamps before $2n+1$, thus there are $n$ inner edges $Q_1$.
Every edge of odd timestamp must be an edge between nodes corresponding to different nodes in $G$ that share an edge in $G$.
Thus, $Q_1$ must have the form
$$
Q_1 = x v'^1_\star v'^1_\star \dots v'^n_\star v'^n_\star x,
$$
where $\star$ may be $1$ or $2$.
Also since there must be $n$  different inner edges in $Q_1$, for each $v \in V$, the nodes $v_1$ and $v_2$ must appear exactly once and consecutively in $Q_1$.
Therefore, $v^1 \dots v^n$ is a Hamiltonian path in $G$, as desired.

Our reduction shows that $\undirProblemCost$ is $\NP$-complete.
\end{proof}

Note that our reduction implies inapproximability analogously to \Cref{cor:inapprox}.

\begin{corollary}\label{cor:uinapprox} 
    If \(\PClass \neq \NP\) there is no constant-factor polynomial-time approximation algorithm for $\undirProblemCost$, even when all intervals are set to $[m]$.
\end{corollary}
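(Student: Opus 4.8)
The plan is to reuse the reduction behind \Cref{thm:undir-hard} essentially verbatim and to observe that it already produces a \emph{gap} between zero-cost and positive-cost instances, which is exactly the ingredient needed to rule out a constant-factor approximation. Concretely, given an undirected graph $G$, the construction of \Cref{thm:undir-hard} builds in polynomial time an $\undirProblemCost$ instance $(G',c)$ in which every interval equals the full interval $[m']$ and every edge cost lies in $\{0,1\}$; moreover, the proof of \Cref{thm:undir-hard} shows that $G$ has a Hamiltonian path if and only if $G'$ admits an Eulerian trail of total cost $0$, and that otherwise every Eulerian trail of $G'$ has cost at least $1$.

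First I would assume, for contradiction, that some polynomial-time algorithm $A$ approximates the minimum-cost Eulerian trail within a constant factor $\gamma \in \R^+$ on instances of this form. Running $A$ on $(G',c)$: if the optimum is $0$, then $A$ must return a trail of cost at most $\gamma \cdot 0 = 0$, hence of cost exactly $0$; if the optimum is at least $1$, then $A$ necessarily returns a trail of cost at least $1 > 0$. Thus the mere question of whether $A$'s output is zero decides whether $G$ has a Hamiltonian path.

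Then I would conclude: since the construction of $(G',c)$ from $G$ is polynomial (as already established in \Cref{thm:undir-hard}) and $A$ runs in polynomial time, the composition yields a polynomial-time algorithm for the undirected Hamiltonian path problem, contradicting $\PClass \neq \NP$. The clause ``even when all intervals are set to $[m]$'' is automatic, because the reduction of \Cref{thm:undir-hard} already produces instances with all intervals equal to $[m']$. This argument mirrors precisely the one used for \Cref{cor:inapprox} in the directed case.

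I do not expect a genuine obstacle here. The only point that needs care is to state explicitly that the reduction of \Cref{thm:undir-hard} is not merely a Karp reduction but has the required zero/non-zero dichotomy on the objective value; this is, however, already visible in that proof, since on the \textsc{No} side the analysis forces at least one edge of positive cost, giving total cost at least $1$. Hence the corollary follows with no additional combinatorial work.
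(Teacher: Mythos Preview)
Your proposal is correct and matches the paper's own treatment: the paper simply states that the reduction of \Cref{thm:undir-hard} implies inapproximability analogously to \Cref{cor:inapprox}, and your argument spells out exactly that analogy—the zero/non-zero cost dichotomy in the reduction lets any constant-factor approximation decide Hamiltonian path.
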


\section{FPT by Interval Width}\label{sec:fpt}

After these strong negative results, it is now natural to ask whether there is a restricted setting in which we can efficiently solve \(\dirProblem\). Interval functions (see \Cref{def:interval-cost}) are a restricted class of functions that are natural to the string reconstruction application.
In this setting, it is equally natural to ask if more precise domain knowledge leads to more efficient algorithms.
Providing tighter intervals for the time step each edge is to be used reduces the combinatorial complexity as there are fewer valid combinations to consider.
This observation also translates to string reconstruction, where we might expect that reconstructing a string is easier when the location of every $k$-mer is narrowed down to a small interval.

\begin{definition}\label{def:width}
Given a \(\dirProblem\) instance \(((V,E),c)\) with interval function $c$, we define the instance's \emph{interval width} $w$ as the length of the longest interval of any edge $e\in E$ under \(c\).
\end{definition}

Interestingly, this concept is closely related to the notion of \emph{interval-membership width} (\(\imw\), in short) studied by Bumpus and Meeks in the context of temporal graphs~\cite{DBLP:journals/algorithmica/BumpusM23}.
This notion asks how many edges in the graph are relevant to the decision at a specific time step $t$ by looking at the set of edges which have been available at or before $t$ \emph{and} will also be available at or after \(t\).
The \(\imw\) is the largest size of such a set for any $t$.
In \Cref{lem:edges-per-interval}, we show that this parameter is related to the length of the longest interval in our setting (\Cref{def:width}).
A special case of this observation also underlies the \(\fpt\) algorithm of Ben{-}Dor et al~\cite{DBLP:journals/jcb/Ben-DorPSS02}.
This relates the length of time any single edge is available to the total number of edges relevant at a given time.
With this observation, we unveil the equivalence of the algorithms by Bumpus and Meeks and by Ben{-}Dor et al.

For dBGs of order $k$, we severely speed up these algorithms by roughly an exponent of $(\log w + 1)/(k-1)$.
In fact, we show that on dBGs, the problem is not only in \(\fpt\) by \(w\), but also by \(w \log w / k\); namely, we show that it suffices if \(w\) is small compared to \(k\) to make the problem tractable. We achieve this by extending the techniques in~\cite{DBLP:journals/jcb/Ben-DorPSS02,DBLP:journals/algorithmica/BumpusM23} with the properties of dBGs and their underlying string interpretation. 
By utilizing a significantly more general setting, Bumpus and Meeks~\cite{DBLP:journals/algorithmica/BumpusM23} have shown that the algorithm by Ben{-}Dor et al.~\cite{DBLP:journals/jcb/Ben-DorPSS02} applies to a significantly broader graph class than dBGs.
We observe that this leaves plenty of dBG-specific structure unexploited to speed up the task. 
Finally, we show how to extend both the existing and our novel approach to find min-cost Eulerian trails.

Considering \Cref{def:width}, the definition of the parametrized problem \(\fptProblem\) is canonical.
We first consider simple graphs, deferring the discussion of multigraphs to the appendix.
\begin{problem}{$\fptProblem$}
    Parameter: & \(w \in \N^+\) \\
       Given: & Directed graph $G=(V, E)$, interval function $c \colon E  \to \intervals{m}$ with interval width \(w\) \\
    Decide: & Is there an Eulerian trail $P = e_1\dots e_{m}$ such that for all \(t \in [m]\), \(t \in c(e_t)\)? 
\end{problem}
The following structural insight (\Cref{lem:edges-per-interval}) generalizes~\cite[Lemma 8]{DBLP:journals/jcb/Ben-DorPSS02} and underpins their algorithm and our reduction to the Bumpus and Meeks version of the \(\fpt\) algorithm: Since any edge is available only at a bounded number of time steps, a similar bound applies to the set of edges available at an individual time step. This insight implies that there are relatively few choices on which edge to select at a specific time step in an Eulerian trail. 

\begin{lemma}
    \label{lem:edges-per-interval}
    If $(G, c, w)$ is a \textsf{YES}-instance of \(\fptProblem\) and $T$ is an interval in $[m]$, then $|E(T)| \leq |T| + 2w - 2$.
\end{lemma}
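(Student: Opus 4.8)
The plan is to fix, once and for all, an Eulerian trail $P = e_1 \dots e_m$ witnessing that $(G,c,w)$ is a \textsf{YES}-instance of $\fptProblem$, so that $t \in c(e_t)$ for every $t \in [m]$, and then to account for the edges of $E(T)$ by how far $P$ uses them from $T$. Write $T = [a,b] \subseteq [m]$, so $|T| = b - a + 1$. First I would isolate the edges $P$ uses \emph{inside} $T$: the set $A \coloneqq \{e_t : t \in T\}$ consists of $|T|$ pairwise distinct edges (because $P$ is a trail), and each lies in $E(T)$ since $e_t$ is available at $t \in T$. Every remaining edge $e \in E(T) \setminus A$ is used by $P$ at some time step $s \notin T$, i.e.\ with $s < a$ or $s > b$.

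The next step is to bound how many edges can be used just outside $T$ while still being available somewhere inside $T$; this is exactly where the interval-width hypothesis enters. Suppose $e = e_s \in E(T)$ with $s < a$. Then $c(e)$ is an interval of at most $w$ integers that contains $s$ and also contains some $t \ge a$ (as $e \in E(T)$), hence $t - s \le w - 1$, which forces $s \ge a - (w-1)$. So such an $s$ lies in $\{a-w+1, \dots, a-1\}$, a set of size $w-1$; since $P$ uses a distinct edge at each time step, at most $w-1$ edges of $E(T) \setminus A$ are used before $T$. A symmetric argument handles $s > b$: then $c(e)$ contains $s$ and some $t \le b$, forcing $s \le b + (w-1)$, so at most $w-1$ edges of $E(T) \setminus A$ are used after $T$.

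Combining the three groups gives $|E(T)| = |A| + |E(T) \setminus A| \le |T| + (w-1) + (w-1) = |T| + 2w - 2$, as claimed. I expect no real obstacle here: the only care needed is the off-by-one bookkeeping in the fact that an interval of at most $w$ integers containing both $s$ and $t$ satisfies $|t - s| \le w - 1$, and the sole role of the \textsf{YES}-hypothesis is to furnish a trail $P$ along which every edge is used exactly once and always within its own interval (this is what makes the ``distinct edge per time step'' accounting and the availability bounds simultaneously valid).
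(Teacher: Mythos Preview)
Your proof is correct and follows essentially the same approach as the paper's. The paper argues more tersely that every edge in $E(T)$ has its availability interval contained in the window $[a-w+1,\,b+w-1]$ of length $|T|+2w-2$, and hence, since a $c$-respecting Eulerian trail must traverse each of them at a distinct time step within that window, their number is at most $|T|+2w-2$; your decomposition into the three groups (used inside $T$, used just before, used just after) is the same counting argument made explicit.
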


\begin{proof}
    Suppose $T = [t,t+p-1]$, so $|T| = p$.
    The available intervals for all edges in $E(T)$ will be sub-intervals of $[t-w+1,t+p-1+w-1]$, as each edge is available for at most $w$ time steps.
    This interval has length $(t+p-1+w-1)-(t-w+1)+1=p+2w-2$. As all edges in $E(T)$ are only available within this interval,  we must be able to traverse them within $p + 2w - 2$ time steps, bounding their number.
\end{proof}

In particular, \Cref{lem:edges-per-interval} implies that at any specific time step (i.e., an interval of length 1), there are at most $2w - 1$ edges available.
Thus, we have that any graph with interval width \(w\) has \(\imw\) at most $2w - 1$.
The condition of this lemma is easy to check in time $\BigO(mw)$ by testing if the number of edges available at each time step is at most $2w - 1$. If this is not the case for any time step, we deduce that $(G,c,w)$ is a \textsf{NO}-instance for $\fptProblem$. We will henceforth assume that the condition has been checked and deduce the following result.

\thmFpt

Three remarks are in order regarding the version of this theorem by Bumpus and Meeks ~\cite{DBLP:journals/algorithmica/BumpusM23}.
First, note that this rendering of the theorem requires a small modification, as we are interested in any Eulerian trail, not necessarily a cycle.
This allows us to slash a factor of \(w\).
Second, their algorithm is for undirected graphs, but it naturally translates to directed graphs.
Third, their version of the algorithm does not exclude a small number of irrelevant states and is thus slightly slower (by another factor of $\sqrt{w}$) than the version by Ben{-}Dor et al. 
For the sake of completeness, we provide a proof with these modifications in \Cref{app:generalizations}.

Specializing the parametrized technique of~\cite{DBLP:journals/jcb/Ben-DorPSS02,DBLP:journals/algorithmica/BumpusM23} to dBGs allows us to design an algorithm that achieves a significant speedup. Our improvements rely on exploiting the specific structure in the dBG graph class in which we are particularly interested in finding Eulerian trails.

\thmFptDbg

Since \(\fptBase\) is defined as a minimum, the algorithm underlying \Cref{thm:fpt-dbg} always runs in time \(\BigO(mw \cdot 2^{\frac{w (\log w+1)}{k-1}})\), as shown by the following lemma.

\begin{lemma}
    For any \(w,k \in \N^+\),
    \((2w-1)^{\frac{w}{k-1}+1}=\BigO(w \cdot 2^{\frac{w (\log w+1)}{k-1}})\).
\end{lemma}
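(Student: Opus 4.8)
The plan is to prove the bound $(2w-1)^{\frac{w}{k-1}+1} = \BigO\!\left(w \cdot 2^{\frac{w(\log w + 1)}{k-1}}\right)$ by a direct computation on the logarithms (base $2$), turning both sides into exponentials of the same base and comparing exponents. First I would write the left-hand side as $2^{\left(\frac{w}{k-1}+1\right)\log(2w-1)}$ and the right-hand side as $2^{\log w + \frac{w(\log w + 1)}{k-1}}$, so the claimed inequality (up to a constant factor) reduces to showing
\[
\left(\frac{w}{k-1}+1\right)\log(2w-1) \;\le\; \log w + \frac{w(\log w + 1)}{k-1} + \BigO(1).
\]

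Next I would split the left-hand side into the two terms $\frac{w}{k-1}\log(2w-1)$ and $\log(2w-1)$. For the first term, use $\log(2w-1) \le \log(2w) = \log w + 1$, which gives $\frac{w}{k-1}\log(2w-1) \le \frac{w(\log w + 1)}{k-1}$, exactly matching the corresponding term on the right. For the second term, again $\log(2w-1) \le \log w + 1 = \log w + \BigO(1)$, which is bounded by the $\log w$ term on the right plus a constant. Adding these up yields the displayed inequality with an additive constant in the exponent, i.e. a multiplicative constant factor, which is absorbed by the $\BigO(\cdot)$. One should also handle the trivial edge case $w = 1$ separately (where $2w - 1 = 1$ and the left-hand side is $1$, while the right-hand side is $\BigO(1)$), and note $k \ge 2$ so that $k-1 \ge 1$ and all the fractions are well-defined and the inequality $\frac{w}{k-1} \le w$ is available if needed.

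I do not expect any serious obstacle here: the statement is essentially the observation that $\log(2w-1)$ and $\log w + 1$ differ by $\BigO(1)$, so replacing one by the other in the exponent costs only a constant factor. The only mild subtlety is bookkeeping — making sure the stray $\log(2w-1)$ term coming from the "$+1$" in the exponent is charged against the $\log w$ factor in the target rather than against the $w$-dependent term, so that the leading behavior $2^{\frac{w(\log w+1)}{k-1}}$ is matched exactly. This is why the target has a bare factor of $w$ out front rather than a constant. Everything else is routine manipulation of logarithms.

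\begin{proof}
    We may assume $w \ge 2$; for $w = 1$ the left-hand side equals $1 = \BigO(1)$ and the claim is trivial. Also $k \ge 2$, so $k - 1 \ge 1$. Taking base-$2$ logarithms, it suffices to show
    \[
        \left(\frac{w}{k-1}+1\right)\log(2w-1) \;\le\; \log w + \frac{w(\log w + 1)}{k-1} + \BigO(1).
    \]
    Since $2w - 1 \le 2w$, we have $\log(2w-1) \le \log w + 1$. Hence
    \[
        \left(\frac{w}{k-1}+1\right)\log(2w-1)
        \;\le\; \frac{w}{k-1}(\log w + 1) + (\log w + 1)
        \;=\; \frac{w(\log w + 1)}{k-1} + \log w + 1.
    \]
    This is exactly the right-hand side with the additive constant $1$, and exponentiating back turns the additive $1$ into a multiplicative factor of $2$, which is absorbed into the $\BigO(\cdot)$. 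Therefore $(2w-1)^{\frac{w}{k-1}+1} = \BigO\!\left(w \cdot 2^{\frac{w(\log w+1)}{k-1}}\right)$.
\end{proof}
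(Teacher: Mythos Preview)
Your proof is correct and takes essentially the same approach as the paper. The paper splits $(2w-1)^{\frac{w}{k-1}+1}$ multiplicatively as $(2w-1)\cdot(2w-1)^{\frac{w}{k-1}}$ and bounds each factor using $\log(2w-1)\le\log w+1$; you do the equivalent computation additively after taking logarithms, and you are slightly more careful in calling out the edge cases $w=1$ and $k\ge 2$.
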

\begin{proof}
    Note that \((2w-1)^{\frac{w}{k-1}+1} = (2w-1) \cdot (2w-1)^{\frac{w}{k-1}}\). We first bound the exponential term. Then multiplying by $(2w-1)$ gives the claimed upper bound. We have that
    \begin{align*}
       (2w-1)^{\frac{w}{k-1}}= 2^{\frac{w \log(2w-1)}{k-1}} \le 2^{\frac{w \log(2w)}{k-1}} = 2^{\frac{w (\log w + \log 2)}{k-1}} = 2^{\frac{w (\log w + 1)}{k-1}}.
    \end{align*}
\end{proof}

Notice that the exponential part is higher than for the \(\BigO(mw^{1.5} \cdot 4^w)\)-time algorithm only if $w(\log w+1)/(k-1) >2w$, that is, if \(w = \Omega(4^k)\); but in this case, since \(\fptBase\) is defined by the minimum, our new algorithm takes \(\BigO(mw \cdot \setsize{\Sigma}^w)\) time.
Thus, for alphabets of size at most 4, in the extreme edge case, our new algorithm is at least a $\sqrt{w}$ factor faster than the state of the art.
Note that since $n \le \setsize{\Sigma}^{k-1}$, for alphabets of size at most 4, this extreme edge case can only occur if $w$ is linear in $n$, where by our \Cref{thm:dir-hard-debruijn-interval} the problem is not tractable.
Specifically, in that case, the running time of both the state of the art and our novel approach become unrealistic. 

We describe how to encode a $\fptProblem$ instance $(G=(V, E), c, w)$, where $G$ is a dBG, using an auxiliary directed graph $H=(V',E')$ of size $\BigO(m \cdot \fptBase^{\frac{w}{k-1}+1})$ in which there is a path between the designated source and target nodes if and only if the instance contains a $c$-respecting Eulerian trail.
The graph $H$ consists of a sequence of $m + 1$ \emph{layers} of nodes with edges only between consecutive layers. In this model, traversing an edge of $H$ between layers $t-1$ and $t$ means traversing a corresponding edge of $G$ at time step $t$. We show the construction of graph $H=(V',E')$ and start by defining the construction of the nodes in $V'$.

The algorithms in~\cite{DBLP:journals/jcb/Ben-DorPSS02,DBLP:journals/algorithmica/BumpusM23} rely on enumerating states, where a state is described by a node $v$ of $G$ (this is the end location of the current partial trail), a time step $t$, and some information about the edges of the partial trail. Both~\cite{DBLP:journals/jcb/Ben-DorPSS02} and~\cite{DBLP:journals/algorithmica/BumpusM23} achieve their \(\fpt\) runtime by observing that it is unnecessary to encode the entire partial trail; instead, it suffices to consider the last $w$ edges of it.
They are even able to show that it is unnecessary to remember the order these edges are used on the trail.
Applying \Cref{lem:edges-per-interval} shows that, for each $v \in V$ and $t \in [m]$, it suffices to consider all subsets of a fixed size.
These are at most ${2w-1 \choose w} = \BigO(4^w / \sqrt{w})$; see \Cref{app:generalizations}. We will use additional structural insights into dBGs to severely reduce the number of states that need to be considered in many relevant cases.

Henceforth, we set \(\fptBase \coloneqq \min(\setsize{\Sigma}^{k-1}, 2w-1)\) as in \Cref{thm:fpt-dbg}.
For ease of notation, we also set $\ell \coloneqq \lceil \min(w,t) / (k-1) + 1 \rceil$ when $t$ is clear from the context. Intuitively, $\ell$ is the number of nodes required to uniquely represent a path of length $\min(w,t)$ in $G$.
By \(\id{v}\) we denote the string of length $k-1$ associated with the node \(v\in V\);
note that, we may abuse this notation and denote by \(\id{W}\) the string of length $k-1+|W|$ for a trail $W=e_1\ldots e_{|W|}$ in $G$. 

\subparagraph{The Nodes of $H$.} We add a node to $V'$ for every tuple $(t, \str)$ such that:
\begin{itemize}
    \item  $t \in [0,m]$ is a time step; and   
    \item $\str \in \Sigma^{\min(w,t)+k-1}$ is a string of length ${\min(w,t)+k-1}$ over $\Sigma$, such that, for all $i \in [\ell]$, the node of $G$ corresponding to the substring $\str[(i-1)(k-1)+1\dd i(k-1)]$ of $\str$ has an incoming edge labeled $\alpha[i(k-1)]$, available at time step $t - w + (i-1)(k-1)$.
\end{itemize}
In order to avoid edge casing, we say that the incoming-edge requirement is fulfilled vacuously for time step 0: for layers $t \le w$, we do not require the first encoded node to have an incoming edge. We also add to $V'$ auxiliary source and target nodes, denoted by $s$ and $z$, respectively.

\begin{remark}
\label{rem:fpt-node-sequence}
While this construction appears overtechnical at first glance, by the definition of a dBG $G=(V,E)$, the string \(\str\) is equivalent to a sequence of nodes $v_\ell, \dots, v_1 \in V$, such that $v_i$ has an incoming edge labeled $\id{v_i}[k-1]$ available at time step \(t - (i-1) \cdot (k-1)\).
\end{remark}

Intuitively, this construction relies on the insight that any path (of length $r$) in a dBG is fully described by the string (of length $r+k-1$) it generates and that this string in turn is fully described by examining every \((k-1)\)-th node on the path.

\subparagraph{The Edges of $H$.} For a node $(t,\str)$ in $V'$, we now potentially create an outgoing edge for each letter $x \in \Sigma$ to $(t+1, \str[2\dd |\str|] \cdot x)$, essentially deleting the leftmost letter of  $\str$ and then appending $x$. However, we add the edge $(t, \str)\overset{x}{\to}(t+1, \str[2\dd |\str|] \cdot x)$ to $E'$ only if:
\begin{itemize}
    \item both the head and tail node exist in $V'$; and
    \item $\str[|\alpha|-k+2\dd |\str|] \cdot x$ (the last $k-1$ letters of $\str$ appended with $x$) does not occur in $\str$.
\end{itemize}
The first condition ensures that the head and tail nodes fulfill the condition that the encoded nodes have incident edges at the required time steps.
The second condition ensures that the edge of $G$ with label $x$ that starts at the node $v$ with $\id{v}=\str[|\str|-k+2\dd |\str|]$ is not traversed more than once in the (Eulerian) trail we aim to construct.

We also add an edge from the source node $s$ to all nodes in layer $0$ (i.e., all nodes where $t=0$), and from all nodes in layer $m$ (i.e., all nodes where $t=m$), to the target node $z$.

See \Cref{fig:fpt-construction-str} for an illustration of this construction.

\begin{figure}[htbp]
    \centering
    \includegraphics[width=0.65\textwidth, page=3]{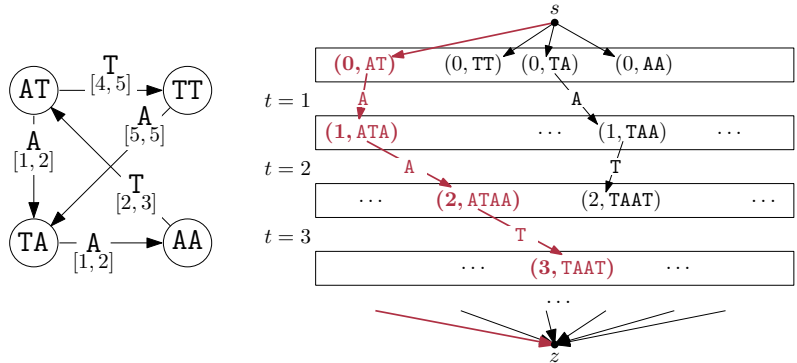}
    \caption{The input de Bruijn graph $G$ (on the left) and the construction of graph \(H\) (on the right) for $k=3$, $|\Sigma|=2$, $w=2$, and thus $w+k-1=4$.  For instance, for $t=2$, $\ell=2$, and $i=2$, we add node $(t, \alpha)=(2,\texttt{ATAA})$ to layer $t=2$ in $H$, because the node of $G$ corresponding to $\str[(i-1)(k-1)+1\dd i(k-1)]=\texttt{AA}$ has an incoming edge labeled $\str[i(k-1)]=\texttt{A}$ available at time step $t-w+(i-1)(k-1)=2$. A $c$-respecting Eulerian trail in $G$ is indicated in red in $H$.}
    \label{fig:fpt-construction-str}
\end{figure}

\begin{lemma}
    \label{lem:h-size-str}
Graph $H$ has $\BigO(m\cdot \fptBase^{\frac{w}{k-1}+1})$ nodes and $\BigO(m \setsize{\Sigma} \cdot \fptBase^{\frac{w}{k-1}+1})$ edges, where \(\fptBase \coloneqq \min(\setsize{\Sigma}^{k-1}, 2w-1)\). Graph $H$ can be constructed in time linear in its maximum size.
\end{lemma}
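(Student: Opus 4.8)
The plan is to bound the number of nodes in each layer of $H$ separately and then sum over the $m+1$ layers; the edge bound then follows since every node has out-degree at most $\setsize{\Sigma}$. Fix a time step $t \in [0,m]$ and consider the nodes $(t,\str)$ in layer $t$. By Remark~\ref{rem:fpt-node-sequence}, such a node is equivalently described by a sequence of $\ell = \lceil \min(w,t)/(k-1) + 1 \rceil$ nodes $v_\ell, \dots, v_1$ of $G$, subject to the incoming-edge availability constraints. So it suffices to count the admissible choices for this sequence. I will give two separate upper bounds on the number of such sequences and take the minimum, matching $\fptBase = \min(\setsize{\Sigma}^{k-1}, 2w-1)$.

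First, the trivial bound: each $v_i$ ranges over $V$, and $\setsize{V} \le \setsize{\Sigma}^{k-1}$, so there are at most $(\setsize{\Sigma}^{k-1})^{\ell}$ sequences. Since $\ell \le w/(k-1) + 1$ (as $\min(w,t) \le w$ and the ceiling adds at most the needed slack — here I would be slightly careful and note $\lceil w/(k-1) + 1\rceil = \lceil w/(k-1)\rceil + 1 \le w/(k-1) + 1$ only up to rounding, but the exponent $\frac{w}{k-1}+1$ in the statement is meant in the $\BigO$ sense absorbing the ceiling), this gives $(\setsize{\Sigma}^{k-1})^{\frac{w}{k-1}+1}$ up to constant factors. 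Second, the structural bound via Lemma~\ref{lem:edges-per-interval}: the key observation is that the incoming edge of $v_i$ must be available at time step $t - w + (i-1)(k-1)$, and all these required time steps lie in the window $[t-w+1, t]$ of length $w$ — wait, more precisely in $[t-w, t]$. Hence every one of the $\ell$ encoded edges lies in $E(T)$ for an interval $T$ of length $w$ (or $\min(w,t)$), and by Lemma~\ref{lem:edges-per-interval}, $\setsize{E(T)} \le \setsize{T} + 2w - 2 \le 3w - 2$. Actually the cleaner route is: at each individual required time step there are at most $2w-1$ available edges (the $\setsize{T}=1$ case of Lemma~\ref{lem:edges-per-interval}), so each encoded node $v_i$ — being the head of an edge available at a fixed time step — has at most $2w-1$ choices, giving at most $(2w-1)^{\ell} \le (2w-1)^{\frac{w}{k-1}+1}$ sequences. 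Combining, layer $t$ has $\BigO(\fptBase^{\frac{w}{k-1}+1})$ nodes.

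Summing over all $m+1$ layers yields $\BigO(m \cdot \fptBase^{\frac{w}{k-1}+1})$ nodes, and adding the two auxiliary nodes $s,z$ does not change this. For the edge count: each node has at most $\setsize{\Sigma}$ outgoing edges (one per letter $x \in \Sigma$), plus the edges incident to $s$ and $z$, of which there are at most one per layer-$0$ and layer-$m$ node; hence $\BigO(m\setsize{\Sigma}\cdot\fptBase^{\frac{w}{k-1}+1})$ edges. Finally, for the construction time: we generate the layers one at a time; for layer $t$ we enumerate candidate strings $\str \in \Sigma^{\min(w,t)+k-1}$ by extending the strings of layer $t-1$ with each letter, check the incoming-edge-availability condition (a constant number of lookups per encoded node, so $\BigO(w/(k-1))$ work using the precomputed $E(t)$ sets) and the no-repeat condition for the last $k-1$ letters, and add the node and edge if both hold. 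Each node and edge is touched a constant number of times, so the total time is linear in $\setsize{V'} + \setsize{E'}$, i.e., linear in the maximum size of $H$.

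\textbf{Main obstacle.} The only genuinely delicate point is pinning down that $\ell \le \frac{w}{k-1}+1$ "up to the ceiling" so that the exponent in the statement is literally correct; I would handle this by observing that the statement's exponent is understood inside the $\BigO$, or alternatively by noting $\ell = \lceil\min(w,t)/(k-1)\rceil + 1$ and that raising the base to a ceiling only costs an extra multiplicative $\fptBase$, which is absorbed. Everything else — the two counting bounds and the claim that each encoded edge lies in a fixed single time step's availability set so that Lemma~\ref{lem:edges-per-interval} applies with $\setsize{T}=1$ — is routine once the equivalence in Remark~\ref{rem:fpt-node-sequence} is invoked.
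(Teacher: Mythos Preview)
Your counting argument for the node and edge bounds is correct and essentially identical to the paper's: both invoke Remark~\ref{rem:fpt-node-sequence} to view a layer-$t$ node as a sequence of $\ell$ nodes of $G$, bound each coordinate by $\setsize{\Sigma}^{k-1}$ (trivially) and by $2w-1$ (via the $\setsize{T}=1$ case of Lemma~\ref{lem:edges-per-interval}, since each encoded node is the head of an edge available at a fixed time step), take the minimum, and sum over layers.

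The gap is in the linear-time construction. You state that the availability check costs ``$\BigO(w/(k-1))$ work'' per node of $H$, mention the no-repeat check without saying how it is done, and then conclude ``each node and edge is touched a constant number of times, so the total time is linear.'' The first statement already contradicts the conclusion, and the no-repeat check --- does the length-$k$ string $\str[|\str|-k+2\dd|\str|]\cdot x$ occur as a substring of $\str$? --- is $\Omega(w)$ if done as a plain substring search. The paper avoids both issues by using a different construction scheme: rather than layer-by-layer generation, it performs a DFS over the \emph{implicit} representation of $H$ while maintaining a single length-$m$ bit-vector that records which edges of $G$ appear among the last $w$ edges on the current DFS path. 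Each forward or backward DFS step flips at most two bits (the edge entering the window and the one falling out), and the no-repeat test becomes a single bit lookup in the word-RAM model. This is what gives $\BigO(1)$ work per edge of $H$ and hence the claimed linear construction time; your layer-by-layer scheme, as written, does not achieve it.
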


\begin{proof}
    Firstly, for any layer $t>0$ in $H$, there can be at most \(\setsize{\Sigma}^{\iWidth + k -1} = \left( \setsize{\Sigma}^{k-1} \right)^{\frac{\iWidth}{k-1}+1}\) choices for \(\str\).
    Secondly, by \Cref{lem:edges-per-interval}, there are at most $2w -1 $ edges available at any time step $t>0$, thus there can be at most that many nodes with at least one incoming edge available at $t$.
    Thus, by \Cref{rem:fpt-node-sequence}, there can be at most \((2w-1)^\ell\) such strings $\str$. 

    By construction, there are precisely $n$ nodes in layer $0$. Counting $s$ and $z$ as well and by the choice of \(\fptBase\), there are at most \(2+n+m \cdot \fptBase^{\frac{w}{k-1}+1}\) nodes in $H$.
    For the edges, observe that every node (except $s$) has out-degree at most $\setsize{\Sigma}$.
    Node $s$ has out-degree $n$ and node $z$ has in-degree the size of the last layer, that is at most $\fptBase^{\frac{w}{k-1}+1}$.
    This yields the bound on the number of edges. We next discuss how $H$ can be efficiently constructed.

To construct $H$ efficiently, we perform a DFS over its \emph{implicit} representation.
For each node $(t, \str)$, we store the last letter of $\str$, and associate each incoming edge with the letter it removed from the string in the previous layer to obtain \(\str\). The only part that requires a little care is when our DFS is at some node $(t,\str)$ and has to decide whether to add an edge with letter $x \in \Sigma$. In such a case, we have to check the condition that the string $\str[|\str|-k+2\dd |\str|] \cdot x$ does not occur in $\str$.
Instead of storing the entire string $\str$ and performing the pattern matching explicitly, we use a single bit-string of length $m$ during the DFS that maintains the set of $w$ previously used edges (which are equivalent to the strings of length $k$ that can appear in this form).
This set can be maintained by correctly flipping the bits corresponding to the edges at the top and at the $w$-th position in the DFS stack every time the DFS moves forward or backward.
In the word RAM model, each edge id is from $[m]$ and so it fits into one machine word. We therefore decide on the existence of each possible edge of $H$ in $\BigO(1)$ time. To conclude, constructing $H$ takes linear time in its maximum size.
\end{proof}

Formalizing the intuition behind the aim of the construction yields the following central lemma. See also \Cref{fig:fpt-construction-str} that illustrates the properties stated in the lemma.
\begin{lemma}
    \label{lem:reachable-str}
    For any $t \in [m]$ and $\str \in \Sigma^{\min(w,t)+k-1}$, the graph $H$ contains a path from $s$ to the node labeled $(t, \str)$ if and only if there exists a trail $W=e_1\ldots e_{t}$ in $G$ such that:
    \begin{enumerate}
        \item every edge from $G$ occurs at most once in $W$; \label{lem:reachable-str:euler}
        \item for every $t' \in [t]$, the edge $e_{t'}$ is available at time step $t'$; \label{lem:reachable-str:available}
        \item the last $w$ edges of $W$ correspond to $\str$; namely, they start at the node labeled $\str[1\dd k-1]$ and then follow the edges corresponding to $\str[k\dd |\str|]$. \label{lem:reachable-str:match}
    \end{enumerate}
\end{lemma}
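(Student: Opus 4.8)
The plan is to prove both directions by induction on $t$, exploiting the layered structure of $H$ and the observation in Remark~\ref{rem:fpt-node-sequence} that a node $(t,\str)$ of $H$ encodes precisely the last $\min(w,t)$ edges of a partial trail together with enough of $G$'s node sequence to reconstruct them. For the forward direction, given a path $s \to (t,\str)$ in $H$, I would strip off the edge into the last node to obtain a path $s \to (t-1,\str')$, where $\str' = \str[1\dd |\str|-1]$ if $t \le w$ and $\str'$ is obtained by also re-prepending the letter recorded on that edge of $H$ if $t > w$; by the induction hypothesis this corresponds to a trail $W' = e_1\dots e_{t-1}$ satisfying the three conditions at time $t-1$. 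I then set $e_t$ to be the edge of $G$ whose label is the last letter of $\str$ and whose tail is the node $\str[|\str|-k+1\dd|\str|-1]$, and let $W = W' e_t$. Condition~\ref{lem:reachable-str:euler} follows from the second edge-construction condition of $H$ (the string $\str[|\str|-k+2\dd|\str|]\cdot x$ not occurring in $\str$ is exactly the statement that $e_t$ is not among the last $w$ edges of $W'$, and an edge used earlier than $w$ steps ago cannot be available now by Lemma~\ref{lem:edges-per-interval}-type reasoning / by the interval-width bound); condition~\ref{lem:reachable-str:available} follows because $(t,\str)$ being a legitimate node of $H$ forces the relevant encoded node to have an incoming edge available at the right time step, and the step from layer $t-1$ to $t$ is exactly time step $t$; condition~\ref{lem:reachable-str:match} is immediate from how $\str$ was shifted.

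For the reverse direction, given a trail $W = e_1\dots e_t$ satisfying the three conditions, I would let $\str$ be the string spelled by the last $\min(w,t)$ edges of $W$ (prefixed by the $(k-1)$-length label of their starting node) and verify that $(t,\str)$ is a valid node of $H$: the incoming-edge availability requirements on the sampled nodes $\str[(i-1)(k-1)+1\dd i(k-1)]$ hold because those edges are among $e_1,\dots,e_t$ and condition~\ref{lem:reachable-str:available} gives their availability, with the time-step bookkeeping $t - w + (i-1)(k-1)$ matching the rank of the corresponding edge in $W$ (using that the last $w$ edges occupy ranks $t-w+1,\dots,t$). Then I apply the induction hypothesis to $W' = e_1\dots e_{t-1}$ to get a path $s \to (t-1,\str')$ in $H$, and check that the edge $(t-1,\str') \overset{x}{\to} (t,\str)$ with $x$ the label of $e_t$ satisfies both edge-construction conditions: the head and tail both exist in $V'$ (just shown), and $\str'[|\str'|-k+2\dd|\str'|]\cdot x$ does not occur in $\str'$ because $e_t$ does not appear among the last $w$ edges of $W'$ — which holds since $W$ is a trail (condition~\ref{lem:reachable-str:euler}) and, crucially, an edge repeated more than $w$ steps apart is impossible here, because the interval-width bound means no edge of $G$ is available both at some time $\le t-w$ and at time $t$. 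Composing the path with this edge yields $s \to (t,\str)$. The base case $t=0$ (or $t=1$) is handled directly: layer $0$ nodes are exactly the $n$ nodes of $G$, all reachable from $s$, and correspond to empty trails.

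The main obstacle I expect is the reverse-direction argument that $e_t$ cannot coincide with an edge of $W'$ used more than $w$ steps earlier — i.e., that "does not repeat within the last $w$ edges" genuinely captures "does not repeat at all" in the trail we are building. This is where the interval-width hypothesis is essential and must be invoked carefully: if $e_t$ were used at some earlier time $t' \le t - w$, then $e_t$ would be available at both $t'$ and $t$, forcing its interval $c(e_t)$ to have length $> w$, contradicting the bound $w$ on interval width (equivalently, contradicting Lemma~\ref{lem:edges-per-interval} applied to the singleton interval, which says at most $2w-1$ edges are available at each step, combined with the observation that each edge spans at most $w$ consecutive steps). A secondary fiddly point is the edge casing for $t \le w$ versus $t > w$ in how the string $\str$ shrinks or shifts between layers, which changes whether a letter is dropped or merely shifted; I would state this once at the start and then treat the two regimes uniformly, relying on the vacuous-incoming-edge convention stated just before Remark~\ref{rem:fpt-node-sequence} to avoid special-casing the first encoded node in early layers.
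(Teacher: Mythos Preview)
Your plan matches the paper's proof: induction on $t$ using the layered structure of $H$, with the reverse direction handled in the paper by a one-line ``verify that such a trail fits the construction of $H$'' (so your more explicit treatment of that direction is welcome). One correction to your ``main obstacle'' paragraph: the interval-width argument is needed only in the \emph{forward} direction---to upgrade ``$e_t$ is not among the last $w$ edges of $W'$'' (which is all $H$'s edge condition gives you) to ``$e_t$ is nowhere in $W'$''---whereas in the reverse direction you are handed a trail $W$, so $e_t$ already appears nowhere else in it and in particular not in the last $w$ edges of $W'$, with no width reasoning required; you had this right in your forward-direction sketch, but the obstacle paragraph misplaces it.
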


\begin{proof}
    Let $v_\ell, \dots, v_1$ be the nodes encoded by \(\str\) under \Cref{rem:fpt-node-sequence}.
    
    We will use induction over the layers of $H$, that is, over $t$ in the statement of the lemma. For the base case, consider the nodes $(0, \str)$ that have a direct edge from $s$.
    There $\setsize{\str} = k-1$, that is, they encode a single node and no edge, thus \Cref{lem:reachable-str:match} is fulfilled.
    Given that we are looking at time step 0, the trail $W$ is empty, so it trivially exists and vacuously fulfills \Cref{lem:reachable-str:euler,lem:reachable-str:available}. 

    For the induction step, we start by proving that the existence of a path from \(s\) to \((t,\str')\) implies the three listed conditions.
    Since $H$ is arranged in layers, $(t,\str')$ must have an incoming edge from some $(t-1,\str)$ such that $(t-1,\str)$ is reachable from $s$.
    By the induction hypothesis (and \Cref{lem:reachable-str:match}), there exists a valid trail $W$ that ends in the node $u_1$ corresponding to the last $k-1$ letters in $\str$ after $t-1$ time steps.
    We will show that the trail $W'$ obtained by appending the outgoing edge from $u_1$ labeled $\str'[|\str'|]$
    (the last letter of $\str'$) to $W$ fulfills the given conditions.
    Let $(u_1, v_1)$ be that edge.
    Note that \(\id{v_1} = \str[|\str|-k+2\dd |\str|]\) (i.e., $v_1$ is precisely the node encoded by the last $k-1$ letters of $\str'$).

    By the construction of the nodes of $H$, the edge $u_1 \overset{x}{\to} v_1$ is available at $t$.
    For all other edges on $W'$, the analogous is true by the induction hypothesis.
    Thus, \Cref{lem:reachable-str:available} holds for $W'$.
    
    For \Cref{lem:reachable-str:euler}, we show that $(u_1,v_1)$ is not in $W$.
    All other edges are unique by the induction hypothesis.
    By the definition of $w$, it can thus only be available after $t-w+1$.
    Thus, if it were to appear on $W$, it must appear as one of the $w-1$ last edges.
    Then the string $\id{u_1} \cdot x$ is a substring of $\str'$, which contradicts our construction of the edges in $H$, as there would then be no edge from $(t-1, \str)$ to $(t,\str')$.
    Thus, $(u_1, v_1)$ does not appear in $W$ and \Cref{lem:reachable-str:euler} holds.

    \Cref{lem:reachable-str:match} follows since $\id{W}[|\id W|-w-k+2\dd |\id W|] \cdot x = \str \cdot x = \id{W}[|\id W|-w-k+2] \cdot \id{W'}[|\id{W'}|-w-k+2\dd |\id{W'}|] = \id{W}[|\id W|-w-k+2] \cdot \str'$ and thus \(\str' = \id{W'}[|\id{W'}|-w-k+2 \dd |\id{W'}|]\), which is the required property.

    The other direction of the implication holds by verifying that such a trail fits the construction of $H$.
\end{proof}

From \Cref{lem:reachable-str}, we can deduce the following crucial property, which both proves the correctness of our algorithm (i.e., \Cref{thm:fpt-dbg}) and underpins our extensions.

\begin{lemma}
    \label{lem:s-z-path-str}
    Graph $H$ contains an $s$-to-$z$ path if and only if there exists an Eulerian trail $W=e_1 \ldots e_m$ in $G$, such that for every $i \in [m]$, edge $e_i$ is available at time step $i$.
\end{lemma}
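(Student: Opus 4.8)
The plan is to derive this statement almost immediately from \Cref{lem:reachable-str} applied at the last layer $t=m$, after making two elementary observations: that the only in-neighbours of $z$ are the nodes in layer $m$, and that a trail of length $m=|E|$ in which every edge of $G$ occurs at most once is necessarily an Eulerian trail.

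For the forward direction, I would argue that any $s$-to-$z$ path in $H$ decomposes as an $s$-to-$(m,\str)$ path for some node $(m,\str)$ in layer $m$ (these being the only nodes with an edge to $z$), followed by that final edge. Applying \Cref{lem:reachable-str} with $t=m$ yields a trail $W=e_1\dots e_m$ in $G$ in which every edge of $G$ occurs at most once and with $e_{t'}$ available at time step $t'$ for every $t'\in[m]$. Since $W$ has exactly $m=|E|$ edges and no edge of $G$ is repeated, every edge of $G$ appears exactly once; hence $W$ is an Eulerian trail, and it respects the availability constraint by \Cref{lem:reachable-str:available}.

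For the backward direction, given an Eulerian trail $W=e_1\dots e_m$ with $e_i$ available at time step $i$ for all $i$, I set $\str$ to be the length-$(\min(w,m)+k-1)$ string spelled by the last $\min(w,m)$ edges of $W$, i.e.\ the appropriate suffix of $\id{W}$. Then $W$ witnesses conditions \ref{lem:reachable-str:euler}--\ref{lem:reachable-str:match} of \Cref{lem:reachable-str} for the pair $(m,\str)$: edge-uniqueness is immediate since each edge of $G$ occurs exactly once in an Eulerian trail, availability is the hypothesis, and the suffix condition holds by the choice of $\str$. Hence \Cref{lem:reachable-str} gives a path from $s$ to $(m,\str)$ in $H$; in particular $(m,\str)$ is a node of $H$, and appending the edge $(m,\str)\to z$ produces the desired $s$-to-$z$ path.

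There is essentially no hard step here; the only point requiring a moment of care is the boundary case $m<w$, where layer-$m$ strings have length $m+k-1$ rather than $w+k-1$, but \Cref{lem:reachable-str} is already stated with $\min(w,t)$ precisely so that this case is covered uniformly. I would also state explicitly that ``every edge of $G$ occurs at most once'' together with ``length exactly $|E|$'' is what upgrades the trail produced by \Cref{lem:reachable-str} into a genuine Eulerian trail -- this equivalence, together with the observation that the suffix-condition \ref{lem:reachable-str:match} imposes no real restriction at layer $m$, is the only substantive content of the argument.
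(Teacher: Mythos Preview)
Your proposal is correct and follows essentially the same approach as the paper: apply \Cref{lem:reachable-str} at layer $t=m$, use that the only in-neighbours of $z$ are the layer-$m$ nodes, and observe that a length-$m$ trail with no repeated edges is Eulerian. You are simply more explicit than the paper about the backward direction (constructing the specific $\str$) and the $m<w$ boundary case, but the argument is the same.
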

\begin{proof}
    By construction of $H$, there are edges from all nodes in layer $m$ to node $z$. \Cref{lem:reachable-str} tells us that there is a path from $s$ to $(m,\str)$ if and only if there exists a trail $W$ of $m$ edges in $G$ such that:
    \begin{enumerate}
        \item every edge occurs at most once in $W$,
        \item for every $t \in [m]$, the $t$-th edge is available at time step $t$.
    \end{enumerate}
It follows immediately that $W$ is an Eulerian trail, as it has length $m$. Since each $(m,\str)$ node has an edge going into $z$, it follows that there is an $s$-to-$z$ path in $H$ if and only if there exists a valid Eulerian trail in $G$.
\end{proof}

\Cref{lem:h-size-str} and \Cref{lem:s-z-path-str}
imply \Cref{thm:fpt-dbg}. We stress that, by \Cref{lem:h-size-str}, 
if we waive the assumption that $|\Sigma|=\BigO(1)$, we get an algorithm that is slower only
by a factor of $|\Sigma|$.

\subparagraph{Generalizations.} We now discuss how to exploit this construction to solve more general problems.
These extensions work analogously on the algorithm underlying \Cref{thm:fpt} for general directed graphs.
As a first extension, we can use these algorithms to report a witness: a $c$-respecting Eulerian trail in $G$.
The following result makes this precise.

\begin{corollary}
\label{thm:finding-alg}
For any instance \((G, c)\), where \(c\) is an interval function with width \(w\), there is an $\BigO(m \cdot \iWidth^{1.5} 4^{\iWidth})$-time algorithm finding a $c$-respecting Eulerian trail in \(G\) if it exists and correctly deciding non-existence otherwise.
On a de Bruijn graph of order \(k\) over alphabet \(\Sigma\), \(|\Sigma|=\BigO(1)\), we can solve this problem in $\BigO(m \cdot \lambda^{\frac{\iWidth}{k-1}+1})$ time, where \(\fptBase \coloneqq \min(\setsize{\Sigma}^{k-1}, 2w-1)\).
\end{corollary}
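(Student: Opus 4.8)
The plan is to observe that both algorithms underlying \Cref{thm:fpt} and \Cref{thm:fpt-dbg} are already \emph{constructive}: each decides the instance by testing reachability from a source to a target in an (explicitly or implicitly) built auxiliary graph, and a witnessing path in that auxiliary graph translates layer by layer into a $c$-respecting Eulerian trail of $G$. So the only additional work is to extract such a path and to decode the corresponding edges of $G$.

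Concretely, for a de Bruijn graph $G$ I would run a breadth- or depth-first search from $s$ in the graph $H$ of \Cref{lem:h-size-str}. If $z$ is not reached, report that no $c$-respecting Eulerian trail exists; this is correct by \Cref{lem:s-z-path-str}. Otherwise the search yields an $s$-to-$z$ path $s, (0, \str_0), (1, \str_1), \dots, (m, \str_m), z$. For each $t \in [m]$, the path edge from $(t-1, \str_{t-1})$ to $(t, \str_t)$ carries a label $x \in \Sigma$, and by the construction of the edges of $H$ this edge represents traversing, at time step $t$, the edge of $G$ that leaves the node with identifier $\str_{t-1}[\setsize{\str_{t-1}}-k+2 \dd \setsize{\str_{t-1}}]$ and is labeled $x$. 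Emitting these edges in order gives a sequence $e_1 \dots e_m$; by (the proof of) \Cref{lem:reachable-str} it uses every edge of $G$ at most once and has $e_t$ available at time step $t$, hence it is an Eulerian trail respecting $c$. For general directed graphs the identical argument applies to the layered state graph built by the algorithm of \Cref{thm:fpt} (detailed in \Cref{app:generalizations}): a source-to-target path there decodes edge by edge into the desired trail.

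For the running time, finding a reachability path and, if it exists, recovering it, costs time linear in the size of the auxiliary graph. By \Cref{lem:h-size-str} this is $\BigO(m \setsize{\Sigma} \cdot \fptBase^{\frac{\iWidth}{k-1}+1}) = \BigO(m \cdot \fptBase^{\frac{\iWidth}{k-1}+1})$ when $\setsize{\Sigma}=\BigO(1)$, matching the construction bound of \Cref{thm:fpt-dbg}; decoding the path into $e_1 \dots e_m$ adds only $\BigO(m)$. For general digraphs the auxiliary graph has size $\BigO(m \cdot \iWidth^{1.5} 4^{\iWidth})$, which yields the first claimed bound.

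The only point needing care is that \Cref{lem:h-size-str} never materializes $H$ but explores an \emph{implicit} representation via a DFS that maintains the set of the last $w$ used edges in a length-$m$ bit-string. I would keep exactly this DFS and stop it the first time it descends to a node in layer $m$; at that moment the DFS stack \emph{is} an $s$-to-$z$ path, and unwinding it while reading the stored per-edge labels yields $e_m, e_{m-1}, \dots, e_1$. This keeps both the time and the working space within the bounds of \Cref{thm:fpt-dbg}, so no genuine obstacle remains beyond this bookkeeping.
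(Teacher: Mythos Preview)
Your proposal is correct and follows essentially the same approach as the paper: both argue that an $s$-to-$z$ path in the auxiliary graph $H$ (from \Cref{lem:s-z-path} for general digraphs and \Cref{lem:s-z-path-str} for de Bruijn graphs) can be decoded edge by edge into the desired $c$-respecting Eulerian trail, at no asymptotic extra cost. Your added paragraph about stopping the implicit DFS at the first layer-$m$ node and unwinding the stack is a nice operational detail the paper leaves implicit, but it does not constitute a different route.
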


\begin{proof}
For general graphs, the result follows from \Cref{lem:s-z-path}.
The Eulerian trail $W$ constructed in the proof of the lemma is easy to compute by taking the $s$-to-$z$ path from $H$ and replacing every edge $(t-1, u, S) \to (t, v, S')$ with the edge $u \to v \in G$.

For dBGs, the result follows from \Cref{lem:s-z-path-str}.
The Eulerian trail $W$ constructed in the proof of the lemma is easy to compute by taking the $s$-to-$z$ path from $H$ and replacing every edge $(t, u, \str) \to (t+1, v, \str')$ with the edge $u \to v \in G$.
\end{proof}

For small $w \log w/(k-1)$ values, the algorithm on dBGs is highly efficient.
At first glance, that is unintuitive, as even if for each pair of consecutive positions there are only two available $k$-mers (edges), there could already be $2^{m/2}$ possible orderings.  
Yet, our result shows that the additional structure of the graph allows us to efficiently select the correct ordering. The following remark captures the significant improvement over previous works which required $w$ to be tractably small to efficiently solve the string reconstruction problem.

\begin{remark}
    Let \(|\Sigma|=\BigO(1)\).
    For any $\frac{w \cdot \log w}{k-1}= \BigO(1)$, the $\fptProblem$ problem on dBGs can be solved in \(\BigO(mw)\) time. For any $w= \BigO(1)$, $\fptProblem$ on dBGs can be solved in \(\BigO(m)\) time. 
\end{remark}

It is important to note that in many practical applications, especially in bioinformatics and data privacy (see \Cref{fig:dBG}), the dBGs considered are multigraphs (have parallel edges).
For our $\fpt$ technique, the analysis above relies on the fact that an edge is uniquely determined by its endpoints, but with some extra care it can be adjusted to extend to multigraphs. We give a full discussion in \Cref{sec:multigraphs}, but we restate the following main observation here.
\begin{restatable}{observation}{obsMultigraphsWork}
If there is a $c$-respecting Eulerian trail in a de Bruijn multigraph $G=(V,E)$, with $m=|E|$, then there is a $c$-respecting Eulerian trail $W=e_1 \dots e_{m}$ such that for any time step $t$ where the edge $e_t$ has a parallel edge $e'$ that is also available at $t$ and appears after $e_t$ on $W$, we have that $\max c(e_t) \le \max c(e')$. 
\end{restatable}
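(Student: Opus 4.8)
The plan is a simple exchange argument. I would work on the set $\mathcal{W}$ of all $c$-respecting Eulerian trails of $G$; it is finite, since every Eulerian trail is a permutation of the $m$ edges, and nonempty by hypothesis. On $\mathcal{W}$ define the potential $M(W) \coloneqq \sum_{i=1}^{m}\big(\max c(e_i)\big)\cdot i$ for $W = e_1\dots e_m$; this is well defined because $W\in\mathcal{W}$ forces $i\in c(e_i)$ and hence $c(e_i)\neq\{\}$, so $\max c(e_i)$ exists. Pick $W = e_1\dots e_m \in \mathcal{W}$ maximizing $M$. The claim is that this $W$ already has the stated property.

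Suppose not: there is a time step $t$ and an edge $e' = e_s$, parallel to $e_t$, with $s > t$, $t\in c(e')$, and $\max c(e_t) > \max c(e')$. I would first check that transposing $e_t$ and $e_s$ stays inside $\mathcal{W}$. Because $e_t$ and $e_s$ have the same tail and the same head, the sequence $W'$ obtained from $W$ by exchanging the entries at positions $t$ and $s$ is again a walk through $G$ that uses every edge exactly once --- i.e.\ an Eulerian trail --- and the only potential subtlety, two parallel edges occupying consecutive positions, can only arise for self-loops and is then immediate. For $W'$ to respect $c$ it suffices that $t\in c(e_s)$ (which is given) and $s\in c(e_t)$; the latter holds because $s \le \max c(e_s) < \max c(e_t)$ (as $e_s$ is used at time $s$) while $\min c(e_t)\le t<s$ (as $W$ respects $c$ at time $t$), so $s$ lies in the interval $c(e_t)$. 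Hence $W'\in\mathcal{W}$, and
\[
    M(W') - M(W) \;=\; \big(\max c(e_t) - \max c(e_s)\big)(s - t) \;>\; 0 ,
\]
contradicting the choice of $W$. Therefore $W$ has the required property.

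I expect the only step needing any thought to be the inclusion $s\in c(e_t)$: this is exactly where the \emph{interval} shape of $c$ is used --- an arbitrary set of admissible times would not close under this transposition --- together with the trivial-but-essential fact that parallel edges of a multigraph may be exchanged in any trail. Note that de Bruijn structure plays no role beyond ``parallel edges share endpoints'', so the same statement and proof hold for arbitrary directed or undirected multigraphs equipped with an interval function, and I would phrase the lemma in that generality if it is useful elsewhere.
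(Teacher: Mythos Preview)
Your proof is correct and follows the same exchange argument the paper sketches (``switching any two parallel edges violating the condition again yields a $c$-respecting Eulerian trail''). Your use of the potential $M(W)=\sum_i (\max c(e_i))\cdot i$ is a clean way to make termination explicit, which the paper leaves to the reader; the key verification that $s\in c(e_t)$ via the interval structure is exactly the point, and you handle it correctly.
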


Moreover, as mentioned above, we can generalize our technique to interval cost functions and retain the same running time, for both general directed graphs and for dBGs.

\findingAlgGen

\begin{proof}
    To see this is a corollary of \Cref{thm:finding-alg}, notice the important change when compared to interval functions: each time step is associated with a cost.

    For general graphs, observe that every Eulerian trail consisting only of available edges in $G$ uniquely corresponds to an $s$-to-$z$ path $W$ in $H$ by \Cref{lem:s-z-path}.
    If we now modify the construction of $H$ such that an edge $(t-1, v, S)$ to $(t, u, S')$ has cost $c(vu, t)$, each such path $W=\{(v_i, i, S_i)\}_{i \in [0,m]}$ has cost precisely $\sum_{i\in [m]}{c(v_{i-1}v_i, i)}$, which by definition is the cost of the corresponding Eulerian trail.
    Thus a min-cost Eulerian trail in $G$ corresponds now to the min-cost $s$-to-$z$ path in $H$.
    Note that since $H$ is a DAG, we can find a shortest $s$-to-$z$ path in graph-linear time, thus we do not asymptotically increase the running time when compared to the unweighted version.
    Having addressed this issue, the desired result follows.

    For dBGs, every Eulerian trail consisting of only available edges in $G$ uniquely corresponds to an $s$-to-$z$ path $W$ in $H$ by \Cref{lem:s-z-path-str}.
    If we modify the construction of $H$ such that an edge $(t-1, \str)$ to $(t, \str')$ has the cost associated with the last encoded edge $e_i$ in $G$, each such path $W=\{(i, \str_i)\}_{i \in [0,m]}$ has cost $\sum_{i\in [m]}{c(e_i, i)}$, where $\id{e_i} = \str_i[|\str_i|-k+1\dd |\str|]$, which by definition is the cost of the corresponding Eulerian trail.
    Thus a min-cost Eulerian trail in $G$ corresponds to the min-cost $s$-to-$z$ path in $H$.
    Finally, since $H$ is a DAG, we can find a shortest $s$-to-$z$ path in linear time, thus asymptotically we retain the running time of the unweighted setting.
\end{proof}

Note that in this setting, we can allow an edge to become temporarily unavailable by setting its cost to at least $\cmax+1$.
In such a case, the width of the edge's interval remains the difference between the last and the first time step at which the edge is available.

\Cref{thm:finding-alg-gen} directly implies that $\dirProblemCost$ parametrized by $\iWidth$ is in $\fpt$.
On dBGs, $\dirProblemCost$ is also in \(\fpt\) when parametrized by $\iWidth \log \iWidth /(k-1)$.

\section{Counting Eulerian Trails with Applications in Data Privacy}
\label{sec:counting}
Our results are of crucial relevance for the data privacy applications presented in~\cite{DBLP:conf/alenex/0001CFLP20} and~\cite{DBLP:journals/jea/BernardiniCFLP21}.
On the one hand, our hardness results (\Cref{thm:dir-hard-debruijn-interval} and \Cref{cor:inapprox}) show that \emph{in general} it is indeed hard to reconstruct a private (binary) string from a given dBG, thus providing theoretical justification for the privacy model introduced in~\cite{DBLP:conf/alenex/0001CFLP20}.
On the other hand, our algorithmic results (\Cref{thm:fpt-dbg} and \Cref{thm:finding-alg-gen}) show that, with sufficiently specific domain knowledge, one can reconstruct such a (best) string in linear time.
To address this concern, the algorithms of~\cite{DBLP:conf/alenex/0001CFLP20,DBLP:journals/jea/BernardiniCFLP21} rely on a counting routine to efficiently ensure that the released string has at least $\kappa$ possible reconstructions.
To verify that the $\kappa$-anonymity property continues to hold even when reconstruction is aided by domain knowledge, we can directly employ the following result as the counting routine in the main algorithm of~\cite{DBLP:conf/alenex/0001CFLP20,DBLP:journals/jea/BernardiniCFLP21}.

\thmAlgCounting
\begin{proof}
    Observe that by \Cref{lem:s-z-path-str} and \Cref{thm:finding-alg-gen}, there is a one-to-one correspondence between min-cost $s$-to-$z$ paths in $H$ and min-cost $c$-respecting Eulerian trails in $G$.
    Thus, we can apply the folklore linear-time shortest-path counting DAG dynamic program~\cite{DBLP:journals/mst/MihalakSW16}.
\end{proof}

\begin{remark}
The algorithm underlying \Cref{thm:alg-counting} can be trivially extended to enumerate all such Eulerian trails in additional time that is linear in the size of the output.
\end{remark}

\paragraph{Acknowledgments.}
SPP is supported in part by the PANGAIA and ALPACA projects that have received funding from the European Union’s Horizon 2020 research and innovation programme under the Marie Skłodowska-Curie grant agreements No 872539 and 956229, respectively.
LS is supported in part by the Netherlands Organisation for Scientific Research (NWO) through project OCENW.GROOT.2019.015 ``Optimization for and with Machine Learning (OPTIMAL)'' and Gravitation-project NETWORKS-024.002.003.
HV is supported by a Constance van Eeden Fellowship.

\bibliography{papers}
\newpage
\appendix
\section{Generalizations of the Ben{-}Dor et al.~and Bumpus and Meeks Algorithm}\label{app:generalizations}
First, we discuss our notation for the algorithm.
This also shows how to shave off a factor of $w$ in exchange for computing an Eulerian trail instead of an Eulerian cycle compared the the Bumpus and Meeks version.
For completeness, we give a correctness proof, but the argument is entirely analogous to the one presented in~\cite{DBLP:journals/jcb/Ben-DorPSS02} and~\cite{DBLP:journals/algorithmica/BumpusM23}.

\begin{lemma}\label{lem:available-edge-set-size}
    Let $t\in [m]$ be a time step.
    The total number of sets $S\subseteq E(t)$, for which there exists some Eulerian trail $P=e_1\ldots e_m$, such that $\{  e_i\in P \mid i\in[0,t], e_i \in E(t) \} = S$, is in $\BigO(4^w/\sqrt{w})$.
    Furthermore, all such sets $S$ have size $t-a$, where $a$ is the number of edges which will not be available at or after $t$.
\end{lemma}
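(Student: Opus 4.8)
The plan is to establish the size claim first, then use it together with \Cref{lem:edges-per-interval} to bound the number of such sets by a binomial coefficient, and finally apply a standard central-binomial estimate.

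First I would prove the ``furthermore'' part, which pins down the size of every admissible $S$. Fix an Eulerian trail $P = e_1 \ldots e_m$ and a time step $t$. Partition $E(t)$ — the edges available at $t$ — according to whether each edge is used on $P$ among the first $t$ positions or not. An edge $e \in E(t)$ with $e = e_i$ for some $i \le t$ contributes to $S$; an edge $e \in E(t)$ with $e = e_i$ for some $i > t$ does not. The only edges among $e_1,\dots,e_t$ that do \emph{not} lie in $E(t)$ are exactly those that are unavailable at $t$, i.e.\ whose interval ends before $t$ (they cannot start after $t$ since they were used by time $t$, and being used at some time in their interval, if the interval reached $t$ they'd be in $E(t)$); call the number of these $a$. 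Hence $|S| = |\{e_1,\dots,e_t\}| - a = t - a$, since all $e_i$ are distinct on an Eulerian trail. This is exactly the claimed size, and crucially it does not depend on which Eulerian trail $P$ we picked — only on $t$ and on the fixed instance — so \emph{all} admissible $S$ have the same size $t-a$.

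Next I would bound the number of admissible sets. By the previous paragraph every admissible $S$ is a subset of $E(t)$ of the fixed size $t-a$, so the number of admissible sets is at most $\binom{|E(t)|}{t-a}$. By \Cref{lem:edges-per-interval} applied to the singleton interval $T = \{t\}$ (of length $1$), we have $|E(t)| \le 1 + 2w - 2 = 2w-1$. Moreover $t - a$ is the number of edges used in the first $t$ steps that are still available at $t$; since each such edge's interval has length at most $w$ and contains $t$, its interval is contained in $[t-w+1, t]$ among the first $t$ steps, so $t-a \le w$ (there are at most $w$ time steps $t-w+1, \dots, t$ and the edges are distinct). Symmetrically $a \ge t - w$, but more useful: the $|E(t)| - (t-a)$ edges of $E(t)$ not in $S$ are the ones used strictly after $t$; each has an interval of length $\le w$ containing $t$, hence contained in $[t, t+w-1]$, giving at most $w-1$ of them, so $|E(t)| - (t-a) \le w - 1$. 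Therefore $t - a \ge |E(t)| - (w-1)$, and combined with $t-a \le w$ we get that $\binom{|E(t)|}{t-a}$ with $|E(t)| \le 2w-1$ is maximized (over the feasible range) near the central value, so it is at most $\binom{2w-1}{w}$.

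Finally, the standard estimate $\binom{2w-1}{w} = \frac12 \binom{2w}{w} = \BigO(4^w/\sqrt{w})$ (via Stirling's approximation, or $\binom{2w}{w} \le 4^w/\sqrt{\pi w}$) gives the claimed $\BigO(4^w/\sqrt{w})$ bound on the number of admissible sets, completing the proof.

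\medskip
\noindent\textbf{Main obstacle.} The genuinely delicate point is the bookkeeping in the first paragraph: correctly identifying that the admissible sets all share one common size determined by $t$ and the instance (not by the trail), and in the third paragraph showing that this common size lies in a window of width $w$ around the ``center'' of $E(t)$ so that the binomial coefficient does not exceed the central binomial coefficient $\binom{2w-1}{w}$. The estimate $\binom{2w}{w} = \BigO(4^w/\sqrt w)$ itself is routine.
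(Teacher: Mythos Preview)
Your proposal is correct and follows essentially the same approach as the paper: establish that every admissible $S$ has the common size $t-a$, bound the number of such sets by $\binom{|E(t)|}{t-a}$, invoke \Cref{lem:edges-per-interval} to get $|E(t)|\le 2w-1$, and finish with the central-binomial estimate. The only difference is that you work harder than necessary in the middle step: your bounds $t-a\le w$ and $|E(t)|-(t-a)\le w-1$ are correct but not needed, since $\binom{|E(t)|}{t-a}\le\binom{2w-1}{t-a}$ (monotonicity in the top) and $\binom{2w-1}{k}\le\binom{2w-1}{w}$ holds for \emph{every} $k$, which is all the paper uses.
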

\begin{proof}
    Call an edge $e$ expired at $t$ if $\max c(e) \le t$.
    Let $\mathcal{S}(t)$ be the set of sets $S$ such that there exists some Eulerian trail $P=e_1\ldots e_m$, with $\{  e_i\in P \mid i\in[0,t], e_i \in E(t) \} = S$.
    We will show that any such $S\in \mathcal{S}(t)$ has size $t-a$, where $a$ is the number of edges expired before $t$.
    Let $S$ be an arbitrary set in $\mathcal{S}(t)$ and let $P$ be an Eulerian trail $P=e_1\ldots e_m$ such that $S=\{  e_i\in P \mid i\in[0,t], e_i \in E(t) \}$ .
    Then, at time $t$, $P$ must have traversed all edges that expire before time $t$.
    Since, at time step $t$, $t$ steps have passed, $\setsize{S} = \setsize{\{i\in [0,t] \mid e_i\in E(t)\}} = t-a$.
    $S$ was taken arbitrarily, so all such sets have size $t-a$.
    This means that $\mathcal{S}(t)\subseteq \{T\subseteq E(t) \mid |T|=t-a\}$ and thus $|\mathcal{S}(t)|\leq {|E(t)|\choose t-a}$.
    We also know by \Cref{lem:edges-per-interval}, that $|E(t)|\leq 2w-1$, so $|\mathcal{S}(t)|\leq {2w-1\choose(t-a)}$.
    Therefore, $|\mathcal{S}(t)|\leq {2w-1 \choose w} \sim \frac{4^w}{\sqrt{w}} \in \BigO( \frac{4^w}{\sqrt{w\pi}}) $.

\end{proof}

We now define the construction of the nodes in $V'$.
We create a node in $V'$ for every triple $(t, v, S)$ such that
\begin{itemize}
    \item  $t \in [0,m]$ is a time step,
    \item $v \in V$ is a node from $G$ which has at least one incoming edge available at $t$,
    \item $S \subseteq E(t)$, if $|S|=t-a(t)$, where $a(t)$ is the number of edges that have expired at time $t$
    (for simplicity, we say $E(0)=\emptyset$), and
    \item if $t>0$, there is some edge in $S$ that ends in $v$
\end{itemize}
We also create auxiliary source and target nodes $s$ and $z$, respectively.

Construct the edges in $E'$ as follows. Every edge either connects two nodes with consecutive time steps, or it connects $s$ to a node with time step 0 or a node with time step $m$ to $z$.
For the source and target nodes, we add the edges $(s,(0,v,\emptyset))$ and $((m,v,S),z)$ for every node $v \in V$ and $S \subseteq E(m)$.
We define the edges between consecutive time steps as follows; refer to \Cref{fig:fpt-construction} for an illustration of the construction.

\begin{figure}[ht]
    \centering
    \includegraphics[height=5cm, page=2]{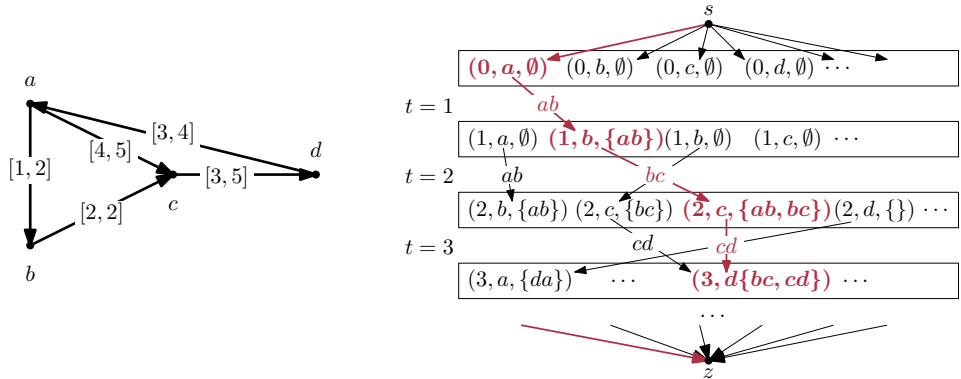}
    \caption{The construction of graph \(H\) (on the right) for the graph $G$ (on the left). The Eulerian trail in $G$ is indicated in red in $H$. Some of the nodes and edges in $H$ are omitted for legibility.}
    \label{fig:fpt-construction}
\end{figure}

\begin{definition}
    \label{def:h-edge}
    Let $(t-1, u, S)$ and $(t, v, S')$ be two nodes from $V'$. We add the edge $((t-1,u,S),(t,v,S'))$ to $E'$ if and only if
    \begin{enumerate}
        \item $(u,v)$ is an edge in $G$ and available at time step $t$, \label{def:h-edge:available}
        \item $(u,v) \notin S$, and \label{def:h-edge:not-picked}
        \item $S'=(S\cap E(t))\cup\{(u,v)\}$. \label{def:h-edge:update-s}
    \end{enumerate}
\end{definition}

Note that by constructing $H$ using an implicit DFS, we can exclude nodes that do not have an incoming edge.
In the next lemma, we prove an upper bound on both the number of nodes and edges of $H$ if it is created this way, as as functions of $n,m$ and $w$.

\begin{lemma}
    \label{lem:h-size}
    Graph $H$ contains at most $2+n+m\cdot \BigO(\sqrt{w} \cdot 4^w)$ nodes and $2n + m\cdot \BigO(w^{1.5} \cdot4^{w})$ edges, where $n$ and $m$ are the number of nodes resp. edges in $G$.
\end{lemma}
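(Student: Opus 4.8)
The plan is to bound the number of nodes and edges of $H$ layer by layer, mirroring the structure already established for the string-based construction in \Cref{lem:h-size-str}. First I would fix a time step $t>0$ and count the nodes $(t,v,S)$ in layer $t$. For the $v$-component, \Cref{lem:edges-per-interval} applied to the singleton interval $T=[t,t]$ tells us there are at most $2w-1$ edges available at $t$, so at most $2w-1$ candidate nodes $v$ have an incoming edge available at $t$; this is where the implicit-DFS pruning pays off, since it guarantees every node of $H$ we actually create satisfies the ``has an incoming available edge'' requirement. For the $S$-component, I would invoke \Cref{lem:available-edge-set-size}: the number of sets $S\subseteq E(t)$ that can arise as the available-edge trace of some Eulerian trail is $\BigO(4^w/\sqrt w)$. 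Multiplying, layer $t$ has $\BigO(w\cdot 4^w/\sqrt w)=\BigO(\sqrt w\cdot 4^w)$ nodes. Summing over the $m$ positive layers, adding $n$ for layer $0$ (one node $(0,v,\emptyset)$ per $v\in V$) and $2$ for $s$ and $z$, gives the claimed $2+n+m\cdot\BigO(\sqrt w\cdot 4^w)$ bound on $|V(H)|$.

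For the edges, I would argue from out-degrees rather than re-counting from scratch. By \Cref{def:h-edge}, an edge of $H$ out of a node $(t-1,u,S)$ in layer $t-1<m$ is determined by choosing an outgoing edge $(u,v)$ of $G$ that is available at time step $t$ and not already in $S$; the successor set $S'$ is then forced by condition~\eqref{def:h-edge:update-s}. Since $G$ is (for the de Bruijn case) of bounded out-degree, or more generally since $u$ has at most $\deg{u}$ outgoing edges and at most $2w-1$ of them are available at any fixed time step by \Cref{lem:edges-per-interval}, each node in a positive layer $<m$ has out-degree $\BigO(w)$. Combined with the $\BigO(\sqrt w\cdot 4^w)$ bound on nodes per layer and the $m$ layers, this contributes $m\cdot\BigO(w^{1.5}\cdot 4^w)$ edges. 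The remaining edges are the $n$ edges from $s$ to layer~$0$ and the at most $n\cdot\BigO(\sqrt w\cdot 4^w)$ — in fact at most the size of layer $m$, which is $\BigO(\sqrt w \cdot 4^w)$ per node-label $v$, i.e.\ $\BigO(n\sqrt w\cdot 4^w)$ — edges from layer $m$ to $z$; but these are dominated by (or for the stated looser bound, absorbed into) the $2n + m\cdot\BigO(w^{1.5}\cdot 4^w)$ expression. One should double-check here that layer $0$ contributes only $\BigO(n)$ out-edges (each $(0,v,\emptyset)$ connects only to nodes $(1,v',S')$ with $|S'|=1$ reachable by a single available edge, so $\BigO(w)$ each, giving $\BigO(nw)$ — again absorbed), so the final tally is $2n+m\cdot\BigO(w^{1.5}\cdot4^w)$ edges.

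The main obstacle — really the only nontrivial point — is the edge-set counting bound, i.e.\ that the number of \emph{attainable} traces $S$ at time $t$ is $\BigO(4^w/\sqrt w)$ rather than the naive $2^{|E(t)|}\le 2^{2w-1}$. But this is exactly the content of \Cref{lem:available-edge-set-size}, whose proof pins down that every attainable $S$ has the \emph{same} cardinality $t-a(t)$ (forced because any Eulerian trail has traversed precisely the expired edges by time $t$), so $|\mathcal S(t)|\le\binom{|E(t)|}{t-a(t)}\le\binom{2w-1}{w}=\Theta(4^w/\sqrt w)$ via Stirling. Everything else is bookkeeping: confirming that the DFS-with-pruning construction never creates a node violating the incoming-edge clause, that out-degrees are $\BigO(w)$ by the availability bound, and that the $s$- and $z$-adjacent edges are lower-order. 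I would present the two bounds as two short paragraphs and finish by noting, as in \Cref{lem:h-size-str}, that $H$ is constructible in time linear in this size.
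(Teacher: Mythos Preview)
Your proposal is correct and follows essentially the same route as the paper's proof: bound the nodes per layer by combining the $2w-1$ choices for $v$ (via \Cref{lem:edges-per-interval}) with the $\BigO(4^w/\sqrt{w})$ attainable sets $S$ (via \Cref{lem:available-edge-set-size}), then bound out-degrees by $2w-1$ to get the edge count. Your discussion of the lower-order $s$-, $z$-, and layer-$0$ contributions is a bit more explicit than the paper's, but the argument is the same.
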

\begin{proof}
    By \Cref{lem:edges-per-interval}, we have that $|E(t)| \leq 2w - 1$ for every time step $t > 0$.
    Every node other than $s$ and $z$ is labeled with a time step $t\in [0,m]$, a node $v \in V$ and a set $S \subseteq E(t)$.
    Moreover, $v$ must be the endpoint of one of the edges in $E(t)$, meaning the number of choices for $v$ comes down to selecting one of the at most $2w - 1$ edges in $E(t)$.
    By \Cref{lem:available-edge-set-size}, the number of sets $S\subseteq E(t)$ we have to consider is bounded by $\BigO(4^w/\sqrt{w})$.
    Thus, for time steps greater than zero, there are at most $\BigO(\sqrt{w} \cdot 4^w)$ distinct triples.
    For nodes with time step zero, we have $E(0)=\emptyset$, and so there are just $n$ such nodes.
    Adding nodes $s$ and $z$ gives a total node count of $2 + n + m \cdot \BigO(\sqrt{w} \cdot 4^{w})$.

    For the number of edges, we first consider the outgoing edges of all labeled nodes with a time step less than $m$.
    For a given node $(t,u,S)$, all outgoing edges go to nodes labeled $(t,v,(S\cap E(t+1))\cup\{(u,v)\})$ for some $(u,v)\in E(t+1)$.
    Since by \Cref{lem:edges-per-interval}, $|E(t+1)| \leq 2w - 1$, we have at most $2w - 1$ outgoing edges for each such labeled node.
    All other edges in $H$ are incident on either $s$ or $z$, with $n$ such edges for each, giving a total edge count of $2n + m\cdot \BigO(w^{1.5} \cdot2^{w})$.
\end{proof}

Let us remark that given a node $(t,v,S)$ of $H$, its outgoing edges can easily be determined by iterating over $v$'s outgoing edges (that are not already in $S$) and checking their availability at time step $t+1$.

In the next lemma, we prove a one-to-one correspondence between length-$t$ paths in $H$ starting from the source node and length-$t$ trails in $G$:

\begin{lemma}
    \label{lem:reachable}
    For any $t \in [m], v \in V, S \subseteq E(t)$, we have that $H$ contains a path from $s$ to the node labeled $(t, v, S)$ if and only if there exists a trail $W=e_1\ldots e_{t}$ in $G$ such that
    \begin{enumerate}
        \item every edge from $G$ occurs at most once in $W$, \label{lem:reachable:euler}
        \item for every $t$, the edge $e_t$ is available at time step $t$, \label{lem:reachable:available}
        \item $W$ ends in $v$ unless it is empty (either $v$ is the endpoint of $e_{t}$ or $t=0$), and \label{lem:reachable:endpoint}
        \item $W \cap E(t) = S$ (i.e., $S$ contains all available but already used edges). \label{lem:reachable:s}
    \end{enumerate}
\end{lemma}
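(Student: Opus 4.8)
The plan is to prove both implications simultaneously by induction on $t$, the layer index of $H$, paralleling the proof of \Cref{lem:reachable-str}. For the base case $t=0$, the only nodes with an incoming edge from $s$ are the triples $(0,v,\emptyset)$, and the empty trail $W$ trivially witnesses \Cref{lem:reachable:euler,lem:reachable:available,lem:reachable:endpoint}, while \Cref{lem:reachable:s} holds by the convention $E(0)=\emptyset$.

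For the forward direction of the inductive step, suppose there is an $s$-to-$(t,v,S')$ path; since $H$ is layered, it reaches $(t,v,S')$ from some node $(t-1,u,S)$ that is itself reachable from $s$. By the induction hypothesis there is a trail $W=e_1\ldots e_{t-1}$ ending at $u$ with $W\cap E(t-1)=S$ satisfying \Cref{lem:reachable:euler,lem:reachable:available}; and by \Cref{def:h-edge} the edge $e_t\coloneqq(u,v)$ of $G$ is available at time $t$, is not in $S$, and $S'=(S\cap E(t))\cup\{e_t\}$. I would set $W'\coloneqq We_t$ and verify the four conditions. \Cref{lem:reachable:available,lem:reachable:endpoint} are immediate. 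The only step needing care is \Cref{lem:reachable:euler}, namely $e_t\notin W$: any edge of $W$ that lies in $E(t)$ was used at a time $t'\le t-1$, and since $c(\cdot)$ takes interval values, such an edge is also available at $t-1$, hence belongs to $W\cap E(t-1)=S$; as $e_t\notin S$ but $e_t\in E(t)$, it cannot occur in $W$. The same interval observation gives $W\cap E(t)=S\cap E(t)$, so $W'\cap E(t)=(S\cap E(t))\cup\{e_t\}=S'$, giving \Cref{lem:reachable:s}.

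For the converse, assume $(t,v,S')$ is a node of $H$ and $W=e_1\ldots e_t$ is a trail witnessing \Cref{lem:reachable:euler,lem:reachable:available,lem:reachable:endpoint,lem:reachable:s}. Let $u$ be the tail of $e_t$, put $W''\coloneqq e_1\ldots e_{t-1}$, and set $S\coloneqq W''\cap E(t-1)$. The first thing to check is that $(t-1,u,S)$ is genuinely a node of $H$: it carries the incoming edge $e_{t-1}$ available at time $t-1$ (for $t\ge2$; when $t=1$ the predecessor is $(0,u,\emptyset)$ and $W''$ is empty), and the size constraint $|S|=(t-1)-a(t-1)$ follows by propagating downward the size constraint $|S'|=t-a(t)$ attached to the node $(t,v,S')$ — a length-$t$ trail realizing that node has already traversed every edge whose availability window has closed before time $t$, hence its prefix $W''$ has traversed every such edge that has closed before time $t-1$. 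The induction hypothesis then supplies an $s$-to-$(t-1,u,S)$ path, and it remains to confirm that $((t-1,u,S),(t,v,S'))$ is an edge of $H$ by checking the three clauses of \Cref{def:h-edge}: availability of $e_t$ at time $t$ is \Cref{lem:reachable:available}; $e_t\notin S\subseteq W''$ is \Cref{lem:reachable:euler}; and $S'=(S\cap E(t))\cup\{e_t\}$ follows, as before, from $W''\cap E(t)=S\cap E(t)$ together with $S'=W\cap E(t)$. Appending this edge to the path completes the induction.

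The proof is light, and the only genuinely delicate points are the repeated appeals to the interval structure of $c$ to carry the set of ``already-used but still-available'' edges between consecutive time steps, the size-constraint propagation that keeps the predecessor triple inside $H$, and the $t\le1$ boundary. Otherwise the argument is a direct unfolding of \Cref{def:h-edge} and is entirely analogous to the proof of \Cref{lem:reachable-str} and to the proofs in~\cite{DBLP:journals/jcb/Ben-DorPSS02,DBLP:journals/algorithmica/BumpusM23}.
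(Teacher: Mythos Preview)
Your proof is correct and follows essentially the same inductive approach as the paper's own proof: induct on the layer index~$t$, handle the base case via the empty trail and $E(0)=\emptyset$, and in the inductive step use the interval structure of~$c$ to argue both that the newly appended edge is fresh and that $W\cap E(t)=S\cap E(t)$. You are in fact slightly more careful than the paper in the converse direction, where you explicitly verify that the predecessor triple $(t-1,u,S)$ satisfies the size constraint $|S|=(t-1)-a(t-1)$ and hence is a genuine node of~$H$; the paper tacitly assumes this when invoking the induction hypothesis.
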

\begin{proof}
    We use induction over the layers of $H$, that is over $t$ in the statement of the lemma. For the base case, consider the nodes $(0, v, \emptyset)$ which have a direct edge from $s$.
    Given that we are looking at time step 0, the trail $W$ is empty so it trivially exists. Moreover, $W \cap E(0) = \emptyset$ fulfilling \Cref{lem:reachable:s}.

    For the induction step, we start by proving that the existence of a path from \(s\) to \((t,v, S')\) implies the four listed conditions.
    By this assumption, $(t,v,S')$ has an incoming edge from some $(t-1,u,S)$ such that $(t-1,u,S)$ is reachable from $s$.
    By the induction hypothesis, there exists a valid trail $W$ that ends in $u$ after $t-1$ time steps, with $W \cap E(t-1) = S$.
    We will show that the trail $W'$ obtained by appending $(u,v)$ to $W$ fulfills the given conditions.

    For \Cref{lem:reachable:euler}, we show that $(u,v)$ is not in $W$.
    All other edges are unique by the induction hypothesis.
    Note that because there is an edge from $(t-1,u,S)$ to $(t,v,S')$, it must be that $(u,v)\notin S$; otherwise, this edge would not fulfill \Cref{def:h-edge:not-picked} from \Cref{def:h-edge}.
    Similarly, $(u,v)$ must be available at time step $t$ by \Cref{def:h-edge:available} of \Cref{def:h-edge}.
    Now assume $(u,v)$ appears at some earlier time step $t'$ on $W$.
    Then, because we are considering interval functions, it must be available at $t-1$ as well since $t' \le t-1 <t$.
    Thus $(u,v) \in W \cap E(t-1) = S$, which violates \Cref{def:h-edge:not-picked} of \Cref{def:h-edge}.

    \Cref{lem:reachable:available} holds trivially due to the corresponding edge existing in $H$.
    \Cref{lem:reachable:endpoint} also holds trivially, as the final edge of $W'$ is $(u,v)$.

    For \Cref{lem:reachable:s}, combine that we already know by the induction hypothesis that $W \cap E(t-1)=S$ with the fact that, by the construction of $H$, we have $S'=(S\cap E(t))\cup\{(u,v)\}$.
    
    To prove the implication in the other direction, suppose there exists some (non-empty) trail $W'=e_1\ldots e_t$ for which the listed conditions hold. Let $t \geq 1$, $e_t = (u,v)$ and $S' = W' \cap E(t)$.
    We will show that the node $(t, v, S')$ is reachable in $H$.
    Let $W = W' \setminus \{(u,v)\}$ and let $S = W \cap E(t-1)$. $W$ and $S$ fulfill the four conditions, so by the induction hypothesis, the node $(t-1,u,S)$ is reachable in $H$.
    We will now show that the edge $((t-1,u,S),(t,v,S'))$ exists in $H$ using the conditions from \Cref{def:h-edge}.

    \Cref{def:h-edge:available,def:h-edge:not-picked} are trivially true.
    For \Cref{def:h-edge:update-s}, we have to show that $S'=(S\cap E(t))\cup\{(u,v)\}$, which holds since $S ' = W' \cap E(t ) = (W \cap E(t-1) \cap E(t)) \cup \{(u,v)\} = (S \cap E(t)) \cup \{(u,v)\}$.
    With all three conditions proven, the edge $((t-1,u,S),(t,v,S'))$ must exist, making the latter node reachable.
 \end{proof}
\begin{lemma}
    \label{lem:s-z-path}
    $H$ contains an $s$-to-$z$ path if and only if there exists an Eulerian trail $W=e_1\ldots e_m$ in $G$, such that for every $i \in [m]$, edge $e_i$ is available at time step $i$.
\end{lemma}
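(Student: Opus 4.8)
The plan is to derive this statement as an immediate corollary of \Cref{lem:reachable}, the layer-by-layer correspondence between $s$-rooted paths in $H$ and valid repetition-free partial trails in $G$. First I would unpack the boundary structure of $H$: by construction $s$ has an outgoing edge to every layer-$0$ node, and $z$ has an incoming edge from every layer-$m$ node $(m,v,S)$. Hence $H$ contains an $s$-to-$z$ path if and only if \emph{some} layer-$m$ node $(m,v,S)$ is reachable from $s$ — the particular values of $v$ and $S$ are immaterial to the statement we want.

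Next I would invoke \Cref{lem:reachable} at $t=m$: a node $(m,v,S)$ is reachable from $s$ exactly when there is a trail $W=e_1\ldots e_m$ in $G$ in which every edge of $G$ appears at most once, each $e_i$ is available at time step $i$, $W$ ends at $v$, and $W\cap E(m)=S$. The one genuinely substantive step is the bookkeeping remark that such a $W$ has exactly $m$ edges and repeats none, so since $m=|E|$ it must use \emph{every} edge of $G$ exactly once; that is, $W$ is precisely an Eulerian trail meeting the availability constraint. This yields the forward direction.

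For the converse, given any Eulerian trail $W=e_1\ldots e_m$ with each $e_i$ available at time step $i$, I would let $v$ be its final endpoint and set $S \coloneqq W\cap E(m)$. Then $W$ (vacuously, in the sense that all four items hold by hypothesis or by definition of $v$ and $S$) satisfies the conditions of \Cref{lem:reachable} at $t=m$, so $(m,v,S)$ is reachable from $s$; appending the edge $(m,v,S)\to z$ completes an $s$-to-$z$ path in $H$.

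I do not anticipate a real obstacle: the heavy lifting — excluding repeated edges via the $S$-component and matching availability constraints layer by layer — is already encapsulated in \Cref{lem:reachable}. The only point requiring a moment of care is the upgrade from ``length-$m$, no repeated edge'' to ``Eulerian'', which is just the pigeonhole observation $m=|E|$, together with noting that we may quantify existentially over the endpoint and edge-set data stored in a layer-$m$ node.
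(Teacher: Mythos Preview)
Your proposal is correct and takes essentially the same approach as the paper: both specialize \Cref{lem:reachable} to $t=m$, use the pigeonhole observation that a length-$m$ trail with no repeated edges is Eulerian, and close with the edge to $z$. The paper is slightly terser (it implicitly uses that any layer-$m$ node must have $S=E(m)$, whereas you quantify over $S$ explicitly), but the argument is the same.
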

\begin{proof}
    By construction of $H$, there are edges from all nodes $(m, v, E(m))$ to $z$. \Cref{lem:reachable} tells us that there is a path from $s$ to $(m,v,E(m))$ if and only if there exists a trail $W$ of $m$ edges in $G$ such that:
    \begin{enumerate}
        \item every edge occurs at most once in $W$,
        \item for every $t \in [m]$, the $t$-th edge is available at time step $t$,
        \item $W$ ends in node $v$, and
        \item $E(m) \subseteq W$.
    \end{enumerate}
It follows immediately that $W$ is an Eulerian trail, as it has length $m$. Since each $(m,v,E(m))$ node has an edge going into $z$, it follows that there is an $s$-to-$z$ path in $H$ if and only if there exists a valid Eulerian trail in $G$.
\end{proof}

\begin{proof}[Proof of \Cref{thm:fpt}]
    We solve the problem by constructing the graph $H$ and checking if there exists an $s$-to-$z$ path.
    By \Cref{lem:h-size}, $H$ takes $\BigO(m \cdot w^{1.5} 4^w)$ space, and it can be constructed in the same time by first creating all the nodes and then creating the outgoing edges for each node by iterating over its available edges.
    Identifying an $s$-to-$z$ path is possible using, for example, depth-first search, which takes time linear in the size of $H$.
    By \Cref{lem:s-z-path}, there exists a valid Euler trail in $G$ if and only if we find an $s$-to-$z$ path in $H$; after deciding if it exists, we can retrieve the trail by observing the labels of the nodes on the $s$-to-$z$ path.
\end{proof}

\section{Extensions to Multigraphs}\label{sec:multigraphs}
So far we have only considered simple graphs, but especially in the setting of string reconstruction, dBGs are often applied as multigraphs; see \Cref{fig:dBG}.
Notably, most  of our algorithmic results translate to this important setting.
First of all, we observe that the state-of-the-art algorithm by Ben{-}Dor et al.~\cite{DBLP:journals/jcb/Ben-DorPSS02} and Bumpus and Meeks~\cite{DBLP:journals/algorithmica/BumpusM23} does not use the fact that edges are uniquely defined by their endpoints. Thus this algorithm, as well as the optimization and counting extensions we  introduce, translate immediately to multigraphs.

For our novel algorithm exploiting the dBG structure, we must be more careful, as this approach crucially relies on identifying edges by their corresponding string from $\Sigma^k$ (or equivalently by their associated letter and start- or end-node). Luckily, the interval availabilities add sufficient structure, that we can use an exchange-style argument to overcome the challenges raised by this non-unique identification.
This relies on the following:
\obsMultigraphsWork*
Simply put, presented with a choice between two parallel edges that are both available at $t$, it suffices to greedily choose the edge whose interval ends earlier.
This holds since switching any two parallel edges violating the condition again yields a $c$-respecting Eulerian trail. 

Taking care that the bit-string which tracks used edges in our novel algorithm chooses which copy of a multiedge to mark according to this rule (breaking ties consistently), the correctness of the algorithm on multigraphs follows.
Unfortunately, this does not translate to the optimization variant as the exchange argument does not preserve the trail's cost, unless we enforce that parallel edges must have the same cost at any time step they share.

Since by the above observation, \Cref{lem:s-z-path} translates to multigraphs saying that each $s$-to-$z$ path in $H$ corresponds to an Eulerian trail whose node sequence is unique, the counting algorithm also works in that setting.
Depending on the application, this might even be the more interesting object to count.
At least in the string reconstruction setting, counting Eulerian trails that yield different strings is the natural objective~\cite{DBLP:conf/fct/ConteGLPPP21,DBLP:journals/jea/BernardiniCFLP21,DBLP:conf/alenex/0001CFLP20}.
For example, in the data privacy application discussed in \Cref{sec:counting}, different Eulerian trails which yield the same string reconstruction do not contribute to the privacy goal.

\end{document}